\newcommand{\inititem}{\setcounter{enumi}{0}\nextitem}
\newcommand{\nextitem}{\addtocounter{enumi}{1}(\theenumi)}
\newcommand{\nfa}{\textrm{NFA}}
\newcommand{\dfa}{\textrm{DFA}}
\newcommand*{\qed}{\raisebox{0.5ex}[0ex][0ex]{\framebox[1ex][l]{}}}
\newtheorem{theorem}{Theorem}
\newtheorem{lemma}[theorem]{Lemma}
\newtheorem{corollary}[theorem]{Corollary}
\newenvironment{proof}{%
  \par\noindent
  {\rmfamily\itshape\mdseries Proof\/}:\hspace{\labelsep}\ignorespaces}%
  {\mbox{}\nolinebreak\hfill~%
  {\qed}
  \medbreak
}
\begin{document}

\title{The Magic Number Problem for\\ Subregular Language Families}

\def\titlerunning{Magic Numbers for Subregular Language Families}
\def\authorrunning{M.~Holzer, S.~Jakobi, M.~Kutrib}

\author{%
Markus Holzer%
\qquad
Sebastian Jakobi%
\qquad
Martin Kutrib
\institute{Institut f\"ur Informatik, Universit\"at Giessen,\\
  Arndtstr.~2, 35392 Giessen, Germany\\
  email: \texttt{$\{$holzer,jakobi,kutrib$\}$@informatik.uni-giessen.de}}
}

\maketitle

\begin{abstract}
  We investigate the magic number problem, that is, the question
  whether there exists a minimal $n$-state nondeterministic finite
  automaton ($\nfa$) whose equivalent minimal deterministic finite
  automaton ($\dfa$) has~$\alpha$ states, for all~$n$ and~$\alpha$
  satisfying $n\leq \alpha\leq 2^n$. A number~$\alpha$ not satisfying
  this condition is called a \emph{magic number} (for~$n$).  It was
  shown in \cite{jiraskova:2007:dbunfafa} that no magic numbers exist
  for general regular languages, while in~\cite{Ge07} trivial and
  non-trivial magic numbers for unary regular languages were
  identified.  We obtain similar results for automata accepting
  subregular languages like, for example, combinational languages,
  star-free, prefix-, suffix-, and infix-closed languages, and
  pre\-fix-, suffix-, and infix-free languages, showing that there are
  only trivial magic numbers, when they exist. For finite languages we
  obtain some partial results showing that certain numbers are
  non-magic.
\end{abstract}

\section{Introduction}
\label{sec:intro}

Nondeterministic finite automata ($\nfa$s) are probably best known for
being equivalent to right-linear context-free grammars and, thus, for
capturing the lowest level of the Chomsky-hierarchy, the family of
regular languages.  It is well known that $\nfa$s can offer
exponential saving in space compared with deterministic finite
automata ($\dfa$s), that is, given some $n$-state $\nfa$ one can
always construct a language equivalent $\dfa$ with at most~$2^n$
states~\cite{RaSc59}. This so-called \emph{powerset construction}
turned out to be optimal, in general. That is, the bound on the number
of states is tight in the sense that for an arbitrary~$n$ there is
always some $n$-state $\nfa$ which cannot be simulated by any $\dfa$
with less than~$2^n$
states~\cite{lupanov:1963:acottfa,Meyer:1971:edagfs,moore:1971:bssspe}. On
the other hand, there are cases where nondeterminism does not help for
the succinct representation of a language compared to $\dfa$s.  These
two milestones from the early days of automata theory form part of an
extensive list of equally striking problems of $\nfa$ related
problems, and are a basis of descriptional complexity. Moreover, they
initiated the study of the power of resources and features given to
finite automata. For recent surveys on descriptional complexity issues
of regular languages we refer to, for
example,~\cite{HoKu09,HoKu09a,HoKu10}.

Nearly a decade ago a very fundamental question on the well known
subset construction was raised in~\cite{IKT00}: Does there always
exists a minimal $n$-state $\nfa$ whose equivalent minimal $\dfa$
has~$\alpha$ states, for all~$n$ and~$\alpha$ with $n\leq \alpha\leq
2^n$? \mbox{A number}~$\alpha$ not satisfying this condition is called a
\emph{magic number} for~$n$. The answer to this simple question turned
out not to be so easy. For $\nfa$s over a two-letter alphabet it was
shown that $\alpha=2^n-2^k$ or $2^n-2^k-1$, for $0\leq k\leq
n/2-2$~\cite{IKT00}, and $\alpha=2^n-k$, for $5\leq k\leq 2n-2$ and
some coprimality condition for~$k$~\cite{IMP03}, are non-magic.
In~\cite{Ji01} it was proven that the integer~$\alpha$ is non-magic,
if $n\leq\alpha\leq 1+n(n+1)/2$. This result was improved by showing
that $\alpha$ is non-magic for $n\leq\alpha\leq 2^{\sqrt[3]{n}}$
in~\cite{Ji08}.  Further non-magic numbers for two-letter input
alphabet were identified in~\cite{Ge05} and~\cite{MaSa08}. It turned
out that the problem becomes easier if one allows more input
letters. In fact, for exponentially growing alphabets there are no
magic numbers at all~\cite{Ji01}. This result was improved to less
growing alphabets in~\cite{Ge05}, to constant alphabets of size four
in~\cite{jiraskova:2007:dbunfafa}, and very recently to three-letter
alphabets~\cite{Ji09}.  Magic numbers for unary $\nfa$s were recently
studied in~\cite{Ge07} by revising the Chrobak normal-form for
$\nfa$s. In the same paper also a brief historical summary of the
magic number problem can be found.  Further results on the magic
number problem (in particular in relation to the operation problem on
regular languages) can be found, for example, in~\cite{Ji08,Ji09a}.

To our knowledge the magic number problem was not systematically
studied for subregular languages families (except for unary
languages). Several of these subfamilies are well motivated by their
representations as finite automata or regular expressions: finite
languages (are accepted by acyclic finite automata), combinational
languages (are accepted by automata modeling combinational circuits),
star-free languages or regular non-counting languages (which can be
described by regular-like expression using only union, concatenation,
and complement), prefix-closed languages (are accepted by automata
where all states are accepting), suffix-closed (or multiple-entry or
fully-initial) languages (are accepted by automata where the
computation can start in any state), infix-closed languages (are
accepted by automata where all states are both initial and accepting),
suffix-free languages (are accepted by non-returning automata, that
is, automata where the initial state does not have any in-transition),
prefix-free languages (are accepted by non-exiting automata, that is,
automata where all out-transitions of every accepting state go to a
rejecting sink state), and infix-free languages (are accepted by
non-returning and non-exiting automata, where these conditions are
necessary, but not sufficient).

The hierarchy of these and some further subregular language families
is well known.  We study all families mentioned with respect to the
magic number problem, and show---except for finite languages, where
only some partial results will be presented---that there are only
trivial magic numbers, whenever they exist.

\section{Definitions}
\label{sec:defs}

Let $\Sigma^*$ denote the set of all \emph{words} over the finite
alphabet $\Sigma$.  For $n\geq 0$ we write~$\Sigma^n$ for the set of
all words of length~$n$.  The \emph{empty word} is denoted by
$\lambda$ and $\Sigma^+ = \Sigma^* \setminus \{\lambda\}$. A
\emph{language}~$L$ over~$\Sigma$ is a subset of~$\Sigma^*$.  For the
length of a word~$w$ we write $|w|$. Set inclusion is denoted by
$\subseteq$ and strict set inclusion by $\subset$.  We write $2^{S}$
for the power set and $|S|$ for the cardinality of a set $S$.

A \emph{nondeterministic finite automaton} ($\nfa$) is a quintuple
$A=(Q,\Sigma,\delta,q_0,F)$, where $Q$ is the finite set of
\emph{states}, $\Sigma$ is the finite set of \emph{input symbols},
$q_0\in Q$ is the \emph{initial state}, $F\subseteq Q$ is the set of
\emph{accepting states}, and \mbox{$\delta: Q\times\Sigma\rightarrow
  2^Q$} is the \emph{transition function}.  As usual the transition
function is extended to \mbox{$\delta: Q\times \Sigma^*\to 2^Q$}
reflecting sequences of inputs: \mbox{$\delta(q,\lambda) =\{q\}$} and
$\delta(q, aw) = \bigcup_{q'\in\delta(q,a)} \delta(q',w)$, for $q\in
Q$, $a\in \Sigma$, and $w\in \Sigma^*$.  A word $w\in \Sigma^*$ is
\emph{accepted} by $A$ if $\delta(q_0,w)\cap F\ne\emptyset$.  The
\emph{language accepted} by~$A$ is $L(A) =\{\,w\in
\Sigma^*\mid\mbox{$w$ is accepted by $A$}\,\}$.

A finite automaton is \emph{deterministic} ($\dfa$) if and only if
$|\delta(q,a)|=1$, for all $q\in Q$ and $a\in\Sigma$. In this case we
simply write $\delta(q,a)=p$ for $\delta(q,a)=\{p\}$ assuming that the
transition function is a mapping $\delta:Q\times\Sigma\rightarrow
Q$. So, any $\dfa$ is complete, that is, the transition function is
total, whereas for $\nfa$s it is possible that $\delta$ maps to the
empty set.  Note that a sink state is counted for $\dfa$s, since they
are always complete, whereas it is not counted for $\nfa$s, since
their transition function may map to the empty set.  In the sequel we
refer to the $\dfa$ obtained from an $\nfa$
$A=(Q,\Sigma,\delta,q_0,F)$ by the power-set construction as
$A'=(2^Q,\Sigma,\delta',\{q_0\},F')$, where
$\delta'(P,a)=\bigcup_{p\in P} \delta(p,a)$, for $P\subseteq Q$ and
$a\in\Sigma$, and $F'=\{\,P\subseteq Q\mid P\cap F\neq\emptyset\,\}$.

As already mentioned in the introduction,
in~\cite{jiraskova:2007:dbunfafa} it was shown that for all
integers~$n$ and~$\alpha$ such that $n\leq\alpha\leq 2^n$, there
exists an $n$-state nondeterministic finite automaton~$A_{n,\alpha}$
whose equivalent minimal deterministic finite automaton has
exactly~$\alpha$ states. Since some of our constructions rely on this
proof and for the sake of completeness and readability we briefly
recall the sketch of the construction.  In the following we call the
\nfa~$A_{n,\alpha}$ the Jir{\'a}sek-Jir{\'a}skov{\'a}-Szabari
automaton, or for short the JJS-automaton.

\begin{theorem}[\cite{jiraskova:2007:dbunfafa}]
  For all integers~$n$ and~$\alpha$ such that $n\leq\alpha\leq 2^n$,
  there exists an $n$-state nondeterministic finite automaton~$A_{n,\alpha}$
  whose equivalent minimal deterministic finite automaton has
  exactly~$\alpha$ states.
\end{theorem}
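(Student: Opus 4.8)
The plan is to realize the count $\alpha$ by directly controlling the reachable and pairwise distinguishable subsets that appear in the power-set construction $A'$ of a suitably designed $n$-state $\nfa$ $A$ over a small fixed alphabet (four letters suffice). The guiding principle is to decouple two competing tasks: \emph{enlarging} the family of reachable subsets so as to reach the target magnitude, and \emph{preserving distinguishability} so that no two reached subsets collapse in the minimal $\dfa$. Assigning these tasks to different input letters is what makes exact control possible, which is why a binary alphabet is not the natural setting here.

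First I would fix the state set $Q=\{0,1,\dots,n-1\}$ with a single accepting state and use one letter $a$ acting as the cyclic permutation $i\mapsto i+1 \bmod n$. Starting from $\{q_0\}=\{0\}$, the letter $a$ alone reaches exactly the $n$ singletons, which are pairwise distinguishable by words in $a^*$; this already settles the extreme case $\alpha=n$ and, more importantly, supplies a distinguishability backbone that the remaining letters can exploit. I would then introduce one or more letters whose transitions branch nondeterministically (for instance a letter sending a distinguished state to a two-element set and acting as the identity elsewhere) in order to make larger subsets reachable, with the reachable family growing in \emph{controlled} steps.

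The delicate point is arranging these auxiliary transitions so that one letter moves the count in large jumps to bring it into the correct range (near $2^n$ when $\alpha$ is large), while a further letter can add exactly one new reachable-and-distinguishable subset at a time for the final fine tuning onto $\alpha$. Because the interval from $n$ to $2^n$ behaves differently at its two ends, I expect to split the argument into cases according to the magnitude of $\alpha$, each case using its own combination of the auxiliary letters. For each case the verification has two halves that must be carried out explicitly: a reachability claim, showing that precisely the intended subsets occur as states of $A'$ (none missing and no unwanted subset sneaking in), and a distinguishability claim, showing that these subsets are pairwise inequivalent, typically by exhibiting for each pair a separating word built from the permutation letter.

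The hard part will be the interaction between the two tasks: adding transitions to reach more subsets can inadvertently merge states that were previously distinguishable, or render unintended subsets reachable, so the real obstacle is proving that the chosen transitions yield \emph{exactly} $\alpha$ reachable and pairwise distinguishable states rather than merely approximately. Securing this simultaneous exact count and separation across the whole range $n\leq\alpha\leq 2^n$ is the combinatorial heart of the construction, and is precisely the role played by having several input letters available.
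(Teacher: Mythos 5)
Your outline correctly identifies the two tasks (reachability and pairwise distinguishability of subsets in the powerset automaton), the use of a small fixed alphabet with different letters assigned to different roles, and the easy extreme cases. But it stops exactly where the actual proof begins, and the step you defer is the one that cannot be waved through. Over a \emph{constant} alphabet you do not get to ``add exactly one new reachable-and-distinguishable subset at a time'': each letter is a single fixed transition relation, the family of reachable subsets is the closure of $\{q_0\}$ under all letters, and the increments produced by any fixed gadget are quantized rather than unit-sized. No mechanism is given in your proposal by which one further letter contributes precisely one additional subset per unit increase of $\alpha$ across the whole range, and that is precisely the combinatorial heart you acknowledge but do not supply.

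The paper (following Jir\'asek, Jir\'askov\'a, and Szabari) resolves this with an arithmetic decomposition: every $\alpha$ with $n<\alpha<2^n$ is written as $\alpha=n-(k+1)+2^k+m$ for suitable $1\leq k\leq n-1$ and $1\leq m<2^k$, and $m$ is further decomposed as a sum of terms $2^{k_i}-1$ with $k\geq k_1>k_2>\cdots>k_\ell\geq 1$ (the last term possibly doubled). The automaton $A_{n,\alpha}$ then consists of an initial tail of $n-(k+1)$ states contributing one reachable singleton each, a $k+1$-state core whose reachable subsets number exactly $2^k$ (all distinguishable), and additional $d$-transitions from tail states into prescribed initial segments of the core, each contributing exactly $2^{k_i}-1$ further reachable and distinguishable subsets. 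Summing these quantized contributions hits every integer in the open interval. Without this decomposition, or an equivalent way of covering every $\alpha$ with building blocks of fixed shape, your plan is a strategy statement rather than a proof: the ``final fine tuning onto $\alpha$'' is asserted, not achieved.
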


In the construction for some fixed integer~$n$ the cases $\alpha=n$
and $\alpha=2^n$ are treated separately by appropriate witness
languages.  For the remaining cases it is first shown that
every~$\alpha$ satisfying $n<\alpha<2^n$ can be written as a specific
sum of powers of two. In particular, for all integers~$n$ and~$\alpha$
such that $n<\alpha<2^n$, there exist integers~$k$ and~$m$ with $1\leq
k\leq n-1$ and $1\leq m<2^k$, such that
\begin{eqnarray*}
  \alpha &=& n-(k+1)+2^k+m\label{eqn:alpha}\\
\noalign{\hbox{and}}
 m &=& (2^{k_1}-1)+(2^{k_2}-1)+\cdots+(2^{k_{\ell-1}}-1)+
   \begin{cases}
     (2^{k_\ell}-1) & \\ 
     2\cdot (2^{k_\ell}-1) & 
   \end{cases}\label{eqn:terms-of-m}
\end{eqnarray*}
where $1\leq\ell\leq k-1$ and $k\geq k_1 > k_2 >\cdots >k_\ell\geq 1$.
Then \nfa s are constructed such that the powerset construction yields
\dfa s whose number of states is exactly one of these powers of two,
which finally have to be combined appropriately to lead to a single
$n$-state $\nfa$ $A_{n,\alpha}$ whose equivalent minimal \dfa\ has
exactly~$\alpha$ states. Automaton $A_{n,\alpha}$ is depicted in
Figure~\ref{fig:Galinas-automaton-Cnk}, where the following
$d$-transitions are not shown:
$$
\delta(i,d)=
\begin{cases}
 \{0,2,3,4,\ldots,k-k_i+1\} & \mbox{if $1\leq i\leq\ell-1$}\\
 \{0,1,\ldots,k-k_i+1\} & \mbox{if $i=\ell$ and~$m$ is of the first form}\\
 \{0,2,3,4,\ldots,k-k_i+1\} & \mbox{if $i=\ell$ and~$m$ is of the second form}\\
 \{0,1,\ldots,k-k_i+1\} & \mbox{if $i=\ell+1$ and~$m$ is of the second form}\\
 \emptyset & \mbox{otherwise.} 
\end{cases}
$$
\begin{figure}[ht]
  \centering 
\includegraphics[width=0.9\textwidth]{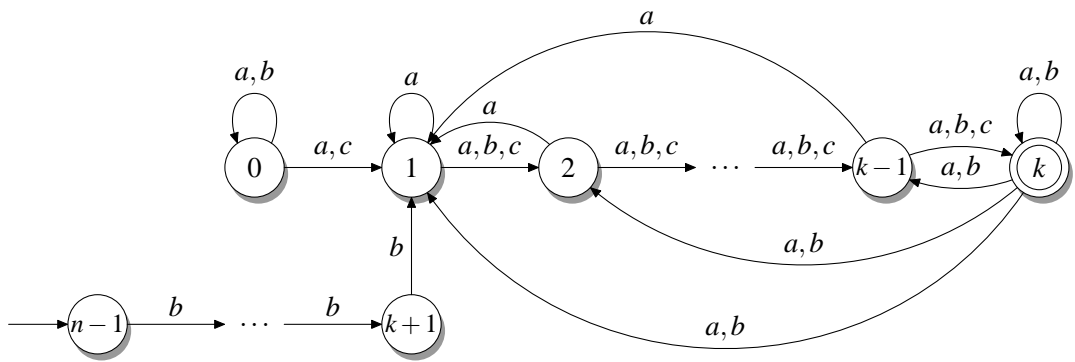}
\caption{Jir{\'a}sek-Jir{\'a}skov{\'a}-Szabari's (JJS)
  nondeterministic finite automaton~$A_{n,\alpha}$ with~$n$ states
  (\mbox{$d$-transitions} are not shown) accepting a language for which the
  equivalent minimal \dfa\ needs exactly $\alpha=n-(k+1)+m$ states.}
  \label{fig:Galinas-automaton-Cnk}
\end{figure}

\section{Results}
\label{sec:results}

We systematically investigate the magic number problem for the
aforementioned subregular language families.  For the remaining
theorems of this paper, when speaking of an $n$-state \nfa\ we always
mean a minimal $\nfa$.  Given a subregular language family, if $f(n)$
is the number of states that is sufficient and necessary in the worst
case for a $\dfa$ to accept the language of an $n$-state $\nfa$
belonging to the family, then a number $\alpha$ with $f(n) < \alpha
\leq 2^n$ is called a \emph{trivial} magic number. Similarly, if
$g(n)$ is the number of states that is necessary for any $\dfa$
simulating an arbitrary $n$-state $\nfa$, then all numbers $\alpha$
with $\alpha < g(n)$ is also called a \emph{trivial} magic number. For
example, for infix-free languages $g(n)$ is shown to be $n+1$ in
Theorem~\ref{thm:prefix-suffix-infix-free}, while $f(n)$ is known to
be $2^{n-2}+2$~\cite{BHK09a}.  Due to space constraints most proofs
are omitted.

An observation from~\cite{BHK09a} shows that the magic number problem
for \emph{elementary} and \emph{combinational} languages is trivial.

\subsection{Star-Free Languages and Power Separating Languages}
\label{sec:star-free}

A language $L\subseteq\Sigma^*$ is \emph{star-free} (or regular
\emph{non-counting}) if and only if it can be obtained from the
elementary languages~$\{a\}$, for $a\in \Sigma$, by applying the
Boolean operations union, complementation, and concatenation finitely
often.  These languages are exhaustively studied, for example,
in~\cite{McNaPa71}.  Since regular languages are closed under Boolean
operations and concatenation, every star-free language is regular. On
the other hand, not every regular language is star free.

Here we use an alternative characterization of star-free languages by
so called permutation-free automata~\cite{McNaPa71}: A regular
language~$L\subseteq\Sigma^*$ is star-free if and only if the minimal
\dfa\ accepting~$L$ is \emph{permutation-free}, that is, there is
\emph{no} word $w\in\Sigma^*$ that induces a non-trivial permutation
of any subset of the set of states. Here a trivial permutation is
simply the identity permutation. Observe that a word~$uw$ induces a
non-trivial permutation $\{q_1,q_2,\ldots,q_n\}\subseteq Q$ in a \dfa\
with state set~$Q$ and transition function~$\delta$ if and only
if~$wu$ induces a non-trivial permutation
$\{\delta(q_1,u),\delta(q_2,u),\ldots,\delta(q_n,u)\}$ in the same
automaton.  Further, if one finds a non-trivial permutation consisting
of multiple disjoint cycles, it suffices to consider a single cycle.
Before we show that no magic numbers exist for star-free languages we
prove a useful lemma on permutations in (minimal) \dfa s obtained by
the powerset construction.

\begin{lemma}\label{lem:non-trivial-permuation-in-powerset-automaton}
  Let~$A$ be a nondeterministic finite automaton with state set~$Q$
  over alphabet~$\Sigma$, and assume that~$A'$ is the equivalent
  minimal deterministic finite automaton, which is
  non-permutation-free. If the word~$w$ in~$\Sigma^*$ induces a
  non-trivial permutation on the state set $\{P_0,P_1,\ldots,
  P_{n-1}\}\subseteq 2^Q$ of~$A'$, that is, $\delta'(P_i,w)=P_{i+1}$,
  for $0\leq i<n-1$, and $\delta'(P_{n-1},w)=P_0$, then there are no
  two states~$P_i$ and $P_j$ with $i\neq j$ such that $P_i\subseteq
  P_j$.
\end{lemma}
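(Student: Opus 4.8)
The plan is to exploit the \emph{monotonicity} of the powerset transition function. Since $\delta'(P,a)=\bigcup_{p\in P}\delta(p,a)$, whenever $P\subseteq P'$ we immediately obtain $\delta'(P,a)\subseteq\delta'(P',a)$, and by a straightforward induction on the length of the input this extends to $\delta'(P,x)\subseteq\delta'(P',x)$ for every $x\in\Sigma^*$. This single observation is the heart of the argument; everything else is bookkeeping about the cyclic structure of the permutation.

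I would then argue by contradiction. Suppose there are indices $i\neq j$ with $P_i\subseteq P_j$, and set $d=(j-i)\bmod n$, so that $1\le d\le n-1$. Because $w$ realizes the single $n$-cycle $\delta'(P_s,w)=P_{(s+1)\bmod n}$ on the state set of the cycle, iterating gives $\delta'(P_s,w^d)=P_{(s+d)\bmod n}$ for every~$s$. Applying the monotone map $\delta'(\cdot,w^d)$ to the inclusion $P_i\subseteq P_j=P_{(i+d)\bmod n}$ yields $P_{(i+d)\bmod n}\subseteq P_{(i+2d)\bmod n}$, and repeating this step produces the increasing chain
$$
P_i\subseteq P_{(i+d)\bmod n}\subseteq P_{(i+2d)\bmod n}\subseteq\cdots .
$$

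The closing move is that the indices $i,(i+d)\bmod n,(i+2d)\bmod n,\dots$ run through a finite cyclic orbit and return to~$i$ after $n/\gcd(n,d)$ steps, since $d\neq 0$. Hence the chain above eventually reaches $P_i$ again, so an increasing chain of subsets starts and ends at the same set, which forces every inclusion in it to be an equality. In particular $P_i=P_{(i+d)\bmod n}=P_j$. But $P_0,P_1,\ldots,P_{n-1}$ are the $n$ \emph{distinct} states permuted by~$w$, hence pairwise distinct subsets of~$Q$, so the equality $P_i=P_j$ with $i\neq j$ is impossible, a contradiction. I do not expect any genuine obstacle here: once monotonicity is isolated, the only point requiring a little care is that the cyclic orbit really closes up (so that a strictly growing chain cannot exist), which rests on $d\neq 0$ together with the finiteness of the cycle.
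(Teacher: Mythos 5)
Your proposal is correct and follows essentially the same route as the paper's own proof: monotonicity of the powerset transition function under inclusion, applied repeatedly along the cycle to produce a chain of inclusions that closes up on itself, forcing $P_i=P_j$ and contradicting distinctness. The only cosmetic difference is that the paper normalizes to $P_0\subseteq P_i$ by a cyclic shift and iterates all the way around ($n$ steps), while you work with the offset $d$ and stop after $n/\gcd(n,d)$ steps; the substance is identical.
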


\begin{proof}
  Assume to the contrary that $P_0\subseteq P_i$ (possibly after a
  cyclic shift), for some $0<i\leq n-1$. Then one can show by
  induction that $\delta'(P_0,v)\subseteq\delta'(P_i,v)$, for every
  word $v\in\Sigma^*$. In particular, this also holds true for the
  word~$w$ that induces the non-trivial permutation on the state set
  $\{P_0,P_1,\ldots, P_{n-1}\}$. But then $ P_{ki\bmod
    n}=\delta'(P_0,w^{ki})\subseteq\delta'(P_i,w^{ki})=P_{(k+1)i\bmod
    n}$, for $k\geq 0$, and one finds the chain of inclusions
  $P_0\subseteq P_i\subseteq P_{2i\bmod n}\subseteq P_{3i\bmod
    n}\subseteq\cdots\subseteq P_{ni\bmod n}=P_0$, which implies
  $P_0=P_i$, a contradiction.
\end{proof}

Now we are prepared for the main theorem, which
utilizes Lemma~\ref{lem:non-trivial-permuation-in-powerset-automaton}.

\begin{theorem}\label{thm:star-free}
  For all integers~$n$ and~$\alpha$ such that $n\leq\alpha\leq 2^n$,
  there exists an $n$-state nondeterministic finite automaton
  accepting a \emph{star-free} language whose equivalent minimal
  deterministic finite automaton has exactly~$\alpha$ states.
\end{theorem}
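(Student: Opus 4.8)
The plan is to reuse the JJS-automaton $A_{n,\alpha}$ supplied by the first theorem, which already realizes the pair $(n,\alpha)$, and to prove that the language $L(A_{n,\alpha})$ is in fact \emph{star-free}. By the permutation-free characterization recalled above, this amounts to showing that the minimal \dfa\ $A'_{n,\alpha}$ is permutation-free, i.e.\ that no word $w\in\Sigma^*$ induces a non-trivial permutation on any subset of its reachable states. The boundary cases $\alpha=n$ and $\alpha=2^n$, which the JJS construction already treats separately, I would dispose of first by exhibiting explicitly star-free witnesses: a finite (hence star-free) chain language for $\alpha=n$, and a ``fixed position from the end'' language of the form $\Sigma^*a\Sigma^{k}$ for $\alpha=2^n$, which is star-free since $\Sigma^*$ is the complement of the empty language. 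Only the range $n<\alpha<2^n$, governed by the automaton of Figure~\ref{fig:Galinas-automaton-Cnk}, then remains.

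For that range, Lemma~\ref{lem:non-trivial-permuation-in-powerset-automaton} is the main tool: if some word~$w$ induced a non-trivial permutation on states $P_0,P_1,\ldots,P_{m-1}$ of $A'_{n,\alpha}$, then these subsets of the \nfa\ state set~$Q$ would have to be pairwise incomparable under~$\subseteq$, and I would derive a contradiction from the transition structure. The crucial observation, readable directly from the displayed $d$-transitions, is that the images $\delta(i,d)$ are, up to the possible absence of state~$1$, initial segments $\{0,2,3,\ldots,k-k_i+1\}$ of the state set; and since $k_1>k_2>\cdots>k_\ell$, these upper bounds are monotone in~$i$, so the $d$-images form a \emph{chain} of nested sets. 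Consequently, reading the letter~$d$ collapses any configuration into this totally ordered family, and by monotonicity of the powerset transition function the resulting images remain comparable under every continuation. I would then combine this with the non-$d$ transitions shown in the figure to argue that the action of an arbitrary word is, relative to a fixed order on subsets, eventually non-increasing, so that a genuine cycle $P_0\to P_1\to\cdots\to P_{m-1}\to P_0$ of pairwise incomparable sets cannot occur unless $m=1$.

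The step I expect to be the main obstacle is precisely this structural analysis, because the original JJS construction attains its exponential blow-up through transitions that behave like a counter, and one must check that no letter---and no composite word---acts as a genuine permutation on the reachable subsets. Wherever a letter of the raw construction does induce a cyclic rotation, I would replace it by an acyclic ``shift-with-reset'' variant and re-verify, by re-running the reachability and Myhill--Nerode distinguishability arguments of the JJS proof on the modified transitions, that the \nfa\ still has~$n$ states and that its minimal \dfa\ still has exactly~$\alpha$ states. Granting that the state counts survive this modification, permutation-freeness then follows from Lemma~\ref{lem:non-trivial-permuation-in-powerset-automaton} together with the chain structure of the $d$-images, which completes the proof.
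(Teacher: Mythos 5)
Your framing---reduce star-freeness to permutation-freeness of the minimal \dfa\ and invoke Lemma~\ref{lem:non-trivial-permuation-in-powerset-automaton}---is the right one, but the proposal has concrete gaps at every stage. For $\alpha=n$ a finite witness cannot work: any \dfa\ for a non-empty finite language needs a rejecting sink state that a minimal \nfa\ can omit (the paper makes exactly this observation in Section~\ref{sec:finite}), so finite languages force $\alpha\geq n+1$; you need an infinite star-free witness such as $a^{n-1}a^*$. For $\alpha=2^n$ the language $\Sigma^*a\Sigma^{k}$ is indeed star-free, but its minimal \nfa\ has $k+2$ states while its minimal \dfa\ has $2^{k+1}$ states, so from $n$ nondeterministic states this family tops out at $2^{n-1}$, not $2^n$. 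Reaching the full bound $2^n$ with a star-free language is itself a non-trivial result (it is the star-free entry of~\cite{BHK09a}) and requires a different, carefully designed witness.

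The central gap is in the range $n<\alpha<2^n$. The JJS-automaton cannot simply be reused and proved star-free: the exponential blow-up in its core on the states $0,\dots,k$ is produced in the standard way by a letter acting cyclically there, and such a letter already induces a non-trivial permutation on the \emph{singleton} states of the powerset automaton. Lemma~\ref{lem:non-trivial-permuation-in-powerset-automaton} is powerless against this, because singletons are pairwise incomparable---the lemma only excludes permutation cycles that contain two $\subseteq$-comparable states. Your chain argument for the $d$-images is sound as far as it goes, but it only covers permutations induced by words containing the letter~$d$ (after reading a~$d$ all reachable sets lie in a chain, so the lemma applies); it says nothing about $d$-free words, which is exactly where the cyclic behaviour lives. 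You acknowledge this and propose to ``replace cyclic letters by an acyclic shift-with-reset variant and re-verify'' reachability and distinguishability---but that modified construction together with its verification that the \nfa\ still needs $n$ states and the \dfa\ still has exactly $\alpha$ states \emph{is} the proof of the theorem; deferring it leaves the argument without its substance.
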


The previous theorem generalizes to all language families that are a
superset of the family of star-free languages such as, for example, the
family of power separating languages introduced
in~\cite{shyr:1974:psrl}. 

\subsection{Stars and  Comet Languages}
\label{sec:stars-comets-etc}

A language $L\subseteq\Sigma^*$ is a \emph{star language} if and only
if $L=H^*$, for some regular language
$H\subseteq\Sigma^*$, and $L\subseteq\Sigma^*$ is a \emph{comet
  language} if and only if it can be represented as concatenation
$G^*H$ of a regular star language~$G^*\subseteq\Sigma^*$ and a regular
language $H\subseteq\Sigma^*$, such that $G\ne\{\lambda\}$ and
$G\ne\emptyset$. Star languages and comet languages were
introduced in~\cite{Br67} and~\cite{BrCo67}. Next, a language
$L\subseteq\Sigma^*$ is a \emph{two-sided comet language} if and only
if $L=EG^*H$, for a regular star language $G^*\subseteq\Sigma^*$ and
regular languages $E,H\subseteq\Sigma^*$, such that $G\ne\{\lambda\}$ and
$G\ne\emptyset$. So, (two-sided) comet languages are always infinite.
Clearly, every star language not equal to $\{\lambda\}$
is also a comet language and every comet is a two-sided comet
language, but the converse is not true in general. 

\begin{theorem}\label{thm:star}
  For all integers~$n$ and~$\alpha$ such that $n\leq\alpha\leq 2^n$,
  there exists an $n$-state nondeterministic finite automaton
  accepting a \emph{star} language whose equivalent minimal
  deterministic finite automaton has exactly~$\alpha$ states.  The
  statement remains valid for \emph{(two-sided) comet} languages.
\end{theorem}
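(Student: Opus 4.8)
The plan is to exploit a very cheap sufficient condition for being a star language and then to re-run the JJS machinery under that condition. Recall that $L$ is a star language precisely when $L=L^{*}$, i.e.\ when $L$ is a submonoid of $\Sigma^{*}$: it contains $\lambda$ and is closed under concatenation. The first step is the observation that every $\nfa$ $A=(Q,\Sigma,\delta,q_{0},\{q_{0}\})$ whose \emph{only} accepting state is its initial state accepts a star language. Indeed, $w\in L(A)$ iff $w$ labels a closed walk from $q_{0}$ back to $q_{0}$; such closed walks include the empty one and are closed under concatenation, so $L(A)$ is a submonoid and hence $L(A)=L(A)^{*}$. Consequently it suffices to realise every target count $\alpha\in[n,2^{n}]$ by a \emph{minimal} $n$-state $\nfa$ of this single-accepting-state-equals-initial-state shape.

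The second step disposes of the two boundary values. For $\alpha=n$ I would take the unary submonoid $(a^{n})^{*}$: both its minimal $\nfa$ and its minimal $\dfa$ are the $n$-state cycle with the initial state as the unique accepting state. For $\alpha=2^{n}$ I would design one witness of the special shape on which the powerset construction reaches all $2^{n}$ subsets and keeps them pairwise distinguishable; here $\{q_{0}\}$ is the accepting start state and $\emptyset$ is the rejecting sink, and distinguishability is certified by exhibiting, for each pair of subsets, a word that steers exactly one of them back into $q_{0}$.

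The third and principal step is the interior range $n<\alpha<2^{n}$. I would keep the JJS decomposition $\alpha=n-(k+1)+2^{k}+m$ together with its expression of $m$ as a sum of terms $2^{k_{i}}-1$, but rebuild the gadgets so that the \emph{combined} automaton retains $q_{0}$ as its sole accepting state. Concretely, for each power-of-two contribution one designs a subgadget in which acceptance is routed back to $q_{0}$, and one then verifies that the reachable subsets of the powerset construction, and their mutual distinguishability, still produce the total count $\alpha$, so that only the accepted language changes—into a submonoid—while the two state counts are preserved. Minimality of the resulting $n$-state $\nfa$ would be shown by a fooling-set argument and minimality of the $\dfa$ by the usual distinguishing-word analysis.

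Finally, the comet and two-sided comet assertions need no separate construction: every star language different from $\{\lambda\}$ is a comet language and every comet is a two-sided comet, and all the witnesses above are infinite for $\alpha\ge 2$, hence distinct from $\{\lambda\}$, so each serves all three families simultaneously (the few degenerate small cases are patched directly). I expect the interior and maximal constructions to be the hard part: forcing the unique-accepting-state-equals-initial-state shape while still driving the powerset construction up to $2^{n}$ reachable, pairwise-distinguishable states is delicate, since making $q_{0}$ accepting ties the acceptance pattern of every subset to membership of $q_{0}$, and one must check that this global constraint neither collapses reachable subsets nor induces unwanted mergers during minimisation.
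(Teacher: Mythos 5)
The paper omits its proof of this theorem (``Due to space constraints most proofs are omitted''), so I can only judge your plan on its own merits. Your entry point is correct and is surely the intended one: an $\nfa$ whose unique accepting state is its initial state accepts a set of closed walks at $q_0$, hence a submonoid of $\Sigma^*$, hence a language $L$ with $L=L^*$, i.e.\ a star language; the case $\alpha=n$ via $(a^n)^*$ is fine; and the reduction of the comet and two-sided comet claims to the star case is exactly the paper's own remark that every star language other than $\{\lambda\}$ is a comet.

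The gap sits in the only place where work is required, namely $n<\alpha\leq 2^n$. You propose to ``rebuild the gadgets so that acceptance is routed back to $q_0$'' inside the JJS automaton, but done over the existing alphabet this is precisely the step that fails: in $A_{n,\alpha}=(Q,\Sigma,\delta,n-1,\{k\})$ the initial state $n-1$ heads a tail $n-1,n-2,\dots,k+1$ into which nothing maps back, so $n-1$ occurs in no reachable subset except the initial one. Any transition from the old accepting state $k$ back to $n-1$ on a letter of $\Sigma$ creates reachable subsets of the form $P\cup\{n-1\}$, from which the tail is re-entered and further new subsets proliferate, destroying the count $\alpha$; your closing paragraph names this danger but supplies no mechanism to avert it. The missing device is a \emph{fresh} letter $\#$ with $\delta(q,\#)=\{q_0\}$ for $q\in F$ and $\delta(q,\#)=\emptyset$ otherwise, combined with replacing $F$ by $\{q_0\}$: the resulting $n$-state $\nfa$ accepts $(L\#)^*$, a star language; reading $\#$ from any reachable subset yields either $\{q_0\}$ (the initial subset, already counted) or $\emptyset$ (already reachable, since the JJS transition function is partial), so no new subsets arise; any two subsets separated by $w$ with respect to $F$ are separated by $w\#$ with respect to $\{q_0\}$, so no mergers occur; and a fooling set $\{(x_i,y_i)\}$ for $L$ transfers to $\{(x_i,y_i\#)\}$ for $(L\#)^*$, preserving $\nfa$-minimality. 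The same device disposes uniformly of $\alpha=2^n$, for which your sketch asserts a witness of the required shape without exhibiting one. As written, your middle step and your maximal case are unproved assertions rather than proofs.
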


\subsection{Subword Specific Languages}

In this section we consider languages for which for every word in the
language either all or none of its prefixes, suffixes or infixes
belong to the same language.  Again, there are only trivial magic
numbers. We start with subword-free languages.

A language $L\subseteq\Sigma^*$ is \emph{prefix-free} if and only if 
$y\in L$ implies $yz\notin L$, for all $z\in\Sigma^+$,
\emph{infix-free} if and only if 
$y\in L$ implies $xyz\notin L$, for all $xz\in\Sigma^+$, and
\emph{suffix-free} if and only if 
$y\in L$ implies $xy\notin L$, for all $x\in\Sigma^+$.

\begin{theorem}\label{thm:prefix-suffix-infix-free}
  Let~$A$ be a minimal $n$-state \nfa\ accepting a non-empty prefix-,
  suffix- or infix-free language. Then any equivalent minimal \dfa\
  accepting language~$L(A)$ needs at least $n+1$ states.
\end{theorem}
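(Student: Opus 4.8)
The plan is to reduce all three cases to one mechanism: pinpointing a single state that the minimal \dfa\ is forced to carry but that a minimal \nfa\ can dispense with. Since every \dfa\ is in particular an \nfa, a minimal \dfa\ for $L=L(A)$ has at least as many states as a minimal \nfa, namely at least~$n$; the entire content of the theorem is therefore to exclude exactly~$n$.

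First I would record the reduction step. Write the minimal \dfa\ as $M=(Q_M,\Sigma,\delta_M,s_0,F_M)$ and suppose it has a \emph{dead} state~$d$, that is, a non-accepting state with $\delta_M(d,a)=d$ for all~$a$, and suppose $d\neq s_0$. Regarding~$M$ as an \nfa\ and deleting~$d$ together with every transition entering it produces an \nfa~$B$ on $|Q_M|-1$ states; as every computation of~$M$ visiting~$d$ was rejecting and $d\notin F_M$, we have $L(B)=L$. Hence the minimal \nfa\ has at most $|Q_M|-1$ states, so $n\le |Q_M|-1$, i.e.\ $|Q_M|\ge n+1$. It therefore suffices to show that the minimal \dfa\ of a non-empty prefix-, suffix-, or infix-free language has a dead state different from its initial state.

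For prefix-free languages this follows from the non-exiting shape of the automaton. Choose $y\in L$ and let $q_f=\delta_M(s_0,y)\in F_M$. Prefix-freeness gives $yz\notin L$ for all $z\in\Sigma^+$, so from~$q_f$ no non-empty word is accepted; thus for any letter~$a$ the state $\delta_M(q_f,a)$ accepts nothing, and after minimization all such states coincide in one dead state~$d$, which is distinct from~$s_0$ because $L\neq\emptyset$. Infix-free languages need no separate treatment: a proper prefix is a proper infix, so every infix-free language is prefix-free and inherits the same dead state.

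The suffix-free case is the crux. The structural analogue is that the minimal \dfa\ is \emph{non-returning}: if $\delta_M(q,a)=s_0$ for a state~$q$ reachable by a word~$u$, then $\delta_M(s_0,ua)=s_0$, whence $uaw\in L\iff w\in L$ for every~$w$; taking any $w\in L$ gives $uaw\in L$ with $ua\in\Sigma^+$, contradicting suffix-freeness, so~$s_0$ has no in-transitions. The difficulty is that being non-returning does not by itself yield a dead state, and---unlike the prefix-free case---there is no purely combinatorial obstruction, since suffix-freeness restricts only how members of~$L$ embed as suffixes, whereas a dead state concerns words admitting no extension into~$L$. I expect the finiteness of the automaton to be essential here: passing to the reversal, $L^{R}$ is prefix-free, and the absence of a dead state in~$M$ translates into every word of~$\Sigma^{*}$ being a suffix of some word of~$L^{R}$. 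A counting/reachability argument on the non-exiting minimal automaton of~$L^{R}$---no set of non-accepting states can be closed under all transitions without failing to reach an accepting state---should rule this out for a regular language, forcing a dead state in~$M$ and closing the bound through the deletion step above.
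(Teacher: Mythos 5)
Your reduction is sound: viewing the minimal \dfa\ $M$ as an \nfa\ and deleting a reachable non-accepting sink $d\neq s_0$ yields an equivalent \nfa\ with $|Q_M|-1$ states (legitimate under the paper's convention that an \nfa's transition function may map to the empty set), so the existence of such a state forces $|Q_M|\geq n+1$. The prefix-free case and its specialization to infix-free languages are handled correctly. The gap is exactly where you place it: for suffix-free languages you never prove that the dead state exists. The principle you invoke --- ``no set of non-accepting states can be closed under all transitions without failing to reach an accepting state'' --- cannot do the work, since the dead state of the minimal \dfa\ of $L^R$ is itself a singleton set of non-accepting states closed under all transitions; and ``should rule this out'' is a hope, not an argument. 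As written, one third of the theorem is unproved.

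The missing step closes directly on $M$, without reversal. If $\lambda\in L$, suffix-freeness gives $x=x\lambda\notin L$ for every $x\in\Sigma^+$, so $L=\{\lambda\}$ and the dead state is immediate. Otherwise pick $y\in L$ with $|y|\geq 1$ and suppose, for contradiction, that every word had an extension inside $L$ (i.e., $M$ has no dead state). By pigeonhole two of the states $\delta_M(s_0,y^k)$, $k\geq 0$, coincide, say for exponents $i<j$. Choose $w$ with $y^jw\in L$; then $y^iw\in L$ as well, yet $y^iw$ is a proper suffix of $y^jw=y^{j-i}\cdot y^iw$ with $y^{j-i}\in\Sigma^+$, contradicting suffix-freeness. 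Hence some word has no extension in $L$, the state it reaches is the dead state (distinct from $s_0$ because $L\neq\emptyset$), and your deletion argument finishes the proof. If you insist on the reversal route, the statement you actually need is that the non-exiting minimal \dfa\ of $L^R$ admits a word driving \emph{every} state into its dead state (such a synchronizing word exists because the dead state is absorbing and reachable from every state, e.g., by extending any word reaching an accepting state by one letter); that word is then a suffix of no word of $L^R$.
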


In the following we show that no non-trivial magic numbers exist for
subword-free languages.  The upper bound for the deterministic blow-up
in prefix- and suffix-free languages is $2^{n-1}+1$ and for infix-free
languages it is $2^{n-2}+2$, so all numbers above are trivially magic.

\begin{theorem}\label{thm:prefix-suffix-free}
  For all integers~$n$ and~$\alpha$ such that $n< \alpha\leq
  2^{n-1}+1$, there exists an $n$-state nondeterministic finite
  automaton accepting a \emph{prefix-free} language whose equivalent
  minimal deterministic finite automaton has exactly~$\alpha$
  states. The statement remains true for \nfa s accepting
  \emph{suffix-free} languages.
\end{theorem}

For infix-free regular languages the situation is slightly different
compared to above.

\begin{theorem}\label{thm:infix-free}
  For all integers~$n$ and~$\alpha$ such that $n< \alpha\leq 2^{n-2}+2$,
  there exists an $n$-state nondeterministic finite automaton
  accepting an \emph{infix-free} language
  whose equivalent minimal deterministic finite automaton has
  exactly~$\alpha$ states. 
\end{theorem}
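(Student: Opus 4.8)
The plan is to reduce to the JJS construction by wrapping a smaller witness between two end markers, spending exactly two of the $n$ states on the non-returning initial state and the non-exiting accepting state that infix-freeness forces. Given $\alpha$ with $n<\alpha\le 2^{n-2}+2$, set $\beta=\alpha-2$, so that $n-1\le\beta\le 2^{n-2}$. Since this range lies inside $[\,n-2,\,2^{n-2}\,]$, the base theorem supplies an $(n-2)$-state NFA $C=A_{n-2,\beta}$ over an alphabet $\Sigma$ accepting a language $K$ whose minimal DFA $D_K$ has exactly $\beta$ states. I then build an $n$-state NFA $A$ by adjoining two fresh symbols $\$,\#\notin\Sigma$ and two fresh states: a non-accepting initial state $s$ with $\delta(s,\$)=\{q_0^C\}$ and $\delta(s,x)=\emptyset$ otherwise, and a single accepting state $f$ with no outgoing transitions, reached by $\delta(q,\#)=\{f\}$ for every accepting state $q$ of $C$; the former accepting states of $C$ become non-accepting in $A$. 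Then $L(A)=\$\,K\,\#$, the state $s$ has no in-transitions (so $A$ is non-returning) and every out-transition of the unique accepting state $f$ leads to the rejecting sink (so $A$ is non-exiting), matching the automata-theoretic characterization of infix-free languages. A short marker argument shows $\$\,K\,\#$ is indeed infix-free: if $\$w\#$ were a proper infix of some $\$w'\#\in L(A)$, the two marker occurrences would have to align, because $w,w'\in\Sigma^*$ contain no markers, forcing the infix to be improper.

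The core of the argument is the state count of the minimal DFA of $L(A)$. Running the powerset construction, the reachable states are: the initial state $\{s\}$; the subsets of the core reachable from $\{q_0^C\}$ by words over $\Sigma$, whose number is exactly the $\beta$ states of $D_K$ since the JJS powerset is already minimal and reading $\$$ never re-enters $s$ (core states carry no $\$$-transition); the accepting state $\{f\}$; and the empty set. I then argue these are pairwise inequivalent. Two distinct core subsets stay distinguishable because a shortest $\Sigma$-word separating them in $D_K$, followed by $\#$, sends one to the accepting $\{f\}$ and the other to $\emptyset$; the state $\{f\}$ is the only accepting state and accepts exactly $\{\lambda\}$; $\{s\}$ is the only state whose language begins with $\$$; and $\emptyset$ accepts nothing.

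The one delicate point is that $\emptyset$ must be identified with a dead state already present in $D_K$, rather than being a genuinely new state — otherwise the count would be $\beta+3$ instead of $\beta+2$. This is where I expect the main work: I must verify, or arrange by a minor modification of the JJS witness, that the minimal DFA of $K$ contains a reachable non-accepting sink, so that the empty set is shared. The partial $d$-transitions of $A_{n-2,\beta}$ (several of which are $\emptyset$) make this plausible in the generic regime, and for the top value $\beta=2^{n-2}$ the full-blow-up witness certainly reaches $\emptyset$. Granting this, the minimal DFA of $L(A)$ has exactly $\beta+2=\alpha$ states.

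Finally, two loose ends remain. Near the lower end $\alpha=n+1$ (where $\beta=n-1$ and the lower bound of Theorem~\ref{thm:prefix-suffix-infix-free} is met) the generic count can degenerate, so I would verify this boundary with a small explicit witness if necessary. And, since our convention requires the $n$-state NFA to be minimal, I would confirm that $A$ admits no equivalent NFA with fewer than $n$ states, for instance by a fooling-set argument exploiting that the markers force any NFA to devote separate states to reading $\$$ and to producing $\#$, on top of the $n-2$ states already necessary for $C$.
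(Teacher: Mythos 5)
The paper itself omits the proof of this theorem (``Due to space constraints most proofs are omitted''), so there is no official argument to compare against. Judged against the paper's general methodology, your construction---wrapping an $(n-2)$-state JJS witness $C=A_{n-2,\beta}$ with $\beta=\alpha-2$ between a fresh non-returning initial state and a fresh non-exiting accepting state via end markers, so that $L(A)=\$K\#$---is exactly in the spirit of the proofs the paper does give (compare the marker symbols $\#,\$$ added to the JJS-automaton in the infix-closed case), and it matches the shape of the upper bound $2^{n-2}+2$ from~\cite{BHK09a}: one initial singleton, the subsets of the $n-2$ middle states, and one accepting class. The range arithmetic ($n-1\leq\beta\leq 2^{n-2}$ lies inside the JJS range for $n-2$ states), the infix-freeness argument via marker alignment, the pairwise-inequivalence arguments, and the fooling-set sketch for minimality of $A$ are all sound.

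The substantive gap is the one you flag yourself but do not close: whether the minimal DFA of $K$ has a dead state, so that the sink $\emptyset$ of the powerset automaton of $A$ merges with one of the $\beta$ core classes instead of contributing a $(\beta+3)$rd state. Calling this ``plausible'' is not a proof; the theorem requires it to hold for \emph{every} $\beta$ with $n-1\leq\beta\leq 2^{n-2}$, i.e., for every JJS witness you invoke. Concretely you must exhibit, for each such witness, a word $w$ with $\delta(q_0^C,w)=\emptyset$ (the partial $d$-transitions of $A_{n-2,\beta}$ are the natural place to look, e.g., reading $d$ from a reachable subset all of whose states have undefined $d$-transitions), or else a reachable nonempty dead subset. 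Note that you cannot repair a failure by shifting to $\beta'=\alpha-3$, since you do not control which of the two counts occurs for a given witness; so this verification is genuinely load-bearing. Two smaller holes remain as well: for $n=2$ (where $\alpha=3$ is in range) your construction calls for a $0$-state core automaton and a direct witness such as $\{a\}$ is needed; and the boundary $\alpha=n+1$ check that you promise is never carried out.
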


Next, we consider prefix-, infix-, and suffix-closed languages.  A
language $L\in\Sigma^*$ is \emph{prefix-closed} if and only if $xy\in
L$ implies $x\in L$, for $x\in\Sigma^*$, \emph{infix-closed} if and
only if $xyz\in L$ implies $y\in L$, for $x,z\in\Sigma^*$, and
\emph{suffix-closed} if and only if $yz\in L$ implies $z\in L$, for
$z\in\Sigma^*$.  We use the following results
from~\cite{kao:2007:nfaasfib}.

\begin{theorem}\label{thm:kao2007}
\inititem\ 
A nonempty regular language is prefix-closed if and only if it
    is accepted by some nondeterministic finite automaton with all
    states accepting.
\nextitem\
A nonempty regular language is infix-closed if and only if it
    is accepted by some nondeterministic finite automaton with multiple
    initial states with all states both initial and accepting.
\end{theorem}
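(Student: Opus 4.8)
The plan is to prove both biconditionals by separating each into its easy direction (the automaton shape forces the closure property) and its harder direction (the closure property lets us build an automaton of the prescribed shape). I would settle part~(1) first and then adapt the argument to part~(2).

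For the easy directions I would simply trace accepting runs. In part~(1), suppose $L=L(A)$ for an $\nfa$~$A$ all of whose states are accepting, and let $xy\in L$. Fix an accepting run on~$xy$; the state reached after reading the prefix~$x$ is accepting, since every state is, so $x\in L$ and $L$ is prefix-closed. In part~(2), if $A$ additionally has every state initial and $xyz\in L$, then an accepting run on~$xyz$ passes through a state~$q$ after reading~$x$ and a state~$p$ after reading~$xy$; since~$q$ is initial and~$p$ is accepting, the segment of the run from~$q$ on~$y$ ending in~$p$ witnesses $y\in L$, so $L$ is infix-closed.

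For the harder directions I would start from the minimal $\dfa$~$D$ for~$L$ and exploit the fact that a nonempty closed language pins down which states must accept. The decompositions $xy=\lambda\cdot x\cdot y$ and $xy=x\cdot y\cdot\lambda$ show that infix-closure implies both prefix- and suffix-closure, so it suffices to observe that prefix-closure alone forces every \emph{useful} state of~$D$ (a state from which some accepting state is reachable) to be accepting: if~$u$ reaches state~$q$ and~$v$ leads from~$q$ to an accepting state, then $uv\in L$, and prefix-closure gives $u\in L$, i.e.~$q$ is accepting. Hence the only non-accepting state of~$D$ is its single dead sink. Deleting that sink yields, for part~(1), an $\nfa$ with all states accepting, and a word survives its run exactly when it avoids the sink, that is, exactly when it lies in~$L$.

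The main obstacle is part~(2): after deleting the sink I would declare every remaining state both initial and accepting, and the nontrivial point is that enlarging the initial set this way accepts no word outside~$L$. Here infix-closure is exactly what is needed. A word~$w$ is accepted by the modified automaton iff some useful state~$q$ admits a run on~$w$ staying among useful states and ending in a useful state~$p$; choosing~$u$ with $\delta(q_0,u)=q$ and~$v$ leading from~$p$ to an accepting state gives $uwv\in L$, whence $w\in L$ by infix-closure. Conversely any $w\in L$ is read from the genuine initial state~$q_0$ without reaching the sink, so~$w$ is accepted. Verifying that the intermediate states along such an accepting run are themselves useful, so that the run really stays inside the trimmed automaton, is the bookkeeping step I expect to require the most care.
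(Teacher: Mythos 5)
The paper offers no proof of this theorem at all: it is imported verbatim from \cite{kao:2007:nfaasfib}, so there is no in-paper argument to compare yours against. Your proof is nevertheless correct and self-contained. The easy directions (tracing an accepting run and cutting it at the prefix/infix boundaries) are fine. For the converse of (1), the chain you give---prefix-closure forces every co-reachable state of the minimal \dfa\ to be accepting, minimality leaves at most one dead state, delete that sink---is exactly what is needed, and the equivalence ``$w\in L$ iff the run of $w$ avoids the sink'' holds because the sink is absorbing. For the converse of (2), the step you flag as delicate does go through: a run of the trimmed automaton on $w$ from an arbitrary state $q$ to a state $p$ extends, via reachability of $q$ and co-reachability of $p$ in the minimal \dfa, to a word $uwv\in L$, and infix-closure returns $w\in L$; conversely, the run of any $w\in L$ from the true initial state never meets the sink (again since the sink is absorbing and $\delta(q_0,w)$ is accepting), so it survives the trimming and all its intermediate states are automatically useful. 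One minor remark: you never actually need suffix-closure---prefix-closure (the special case $x=\lambda$ of infix-closure) already makes all non-dead states accepting, and infix-closure proper is invoked only once, to pull $uwv\in L$ back to $w\in L$.
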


Prefix-closed languages reach the upper bound of $2^n$ states, and 
for infix-closed languages it is \mbox{$2^{n-1}+1$}. Up to these bounds
the only magic number for both language families is $n$
(except for $n=1$). The upper bound for suffix-closed
languages is $2^{n-1}+1$, and up to this, no magic numbers exist.

\begin{theorem}\label{thm:infix-closed}
  For all integers~$n$ and~$\alpha$ such that $n< \alpha\leq 2^{n-1}+1$,
  there exists an $n$-state nondeterministic finite automaton
  accepting an \emph{infix-closed} language
  whose equivalent minimal deterministic finite automaton has
  exactly~$\alpha$ states. The case $\alpha=n$ can only be reached for
  $n=1$. 
\end{theorem}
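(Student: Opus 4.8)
The plan is to reduce the statement, by Theorem~\ref{thm:kao2007}~(2), to constructing for each admissible $\alpha$ an $n$-state \nfa\ $A=(Q,\Sigma,\delta,Q,Q)$ all of whose states are simultaneously initial and accepting; any such automaton accepts an infix-closed language. The powerset automaton $A'$ then has a very rigid shape: its start state is the full set $Q$, and a subset $P\subseteq Q$ is accepting in $A'$ if and only if $P\ne\emptyset$. Since $L(A)$ is infix-closed, $y\notin L(A)$ already forces $yz\notin L(A)$, so $\emptyset$ is an absorbing dead state and in the minimal \dfa\ all rejecting states collapse to this single sink. Counting the states of the minimal \dfa\ therefore amounts to counting the reachable and pairwise distinguishable \emph{nonempty} subsets of $Q$, and then adding one for the sink.

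For the achievability of a given $\alpha$ with $n<\alpha\le 2^{n-1}+1$ I would reuse the JJS-automaton, but first single out one state, say state~$0$, that receives no in-transitions at all. Then state~$0$ occurs only in the initial set $Q$ and in none of the later reachable subsets, so every subset reachable after the first step lies inside the remaining $n-1$ states. This is exactly what matches the known bound: there are at most $2^{n-1}$ subsets of these $n-1$ states (the empty sink among them), and together with the distinguished start state $Q$ this accounts for at most $2^{n-1}+1$ \dfa\ states. On the inner states $\{1,\dots,n-1\}$ I would install an adaptation of the JJS construction, tuned so that precisely $\alpha-2$ nonempty inner subsets become reachable and distinguishable, and wire state~$0$ so that the first input step produces the intended inner start configuration; counting $Q$ and the sink then yields exactly $\alpha$ states. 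Declaring all states initial and accepting keeps $L(A)$ infix-closed throughout.

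The main obstacle is distinguishability in this all-accepting regime, where the usual separation by accepting versus rejecting states is unavailable. Here two subsets $P$ and $P'$ are \dfa-equivalent precisely when they possess the same \emph{death words}, that is, $\{\,w\mid\delta'(P,w)=\emptyset\,\}=\{\,w\mid\delta'(P',w)=\emptyset\,\}$, and this set equals the intersection $\bigcap_{p\in P}\{\,w\mid\delta(p,w)=\emptyset\,\}$ of the death languages of the individual states. The inner transitions must therefore be chosen so that these intersections separate all intended subsets and so that $Q$ is not equivalent to any inner subset. Verifying this separation---rather than mere reachability---is the technical heart of the argument, and it is where the bookkeeping of the JJS decomposition of $\alpha$ into powers of two, together with Lemma~\ref{lem:non-trivial-permuation-in-powerset-automaton}, has to be re-examined under the all-accepting constraint.

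Finally I would settle the boundary value $\alpha=n$. If $L(A)=\Sigma^*$ its minimal \dfa\ has a single state, forcing $n=\alpha=1$. Otherwise there is a word $w\notin L(A)$, which by infix-closure is dead ($xwz\notin L(A)$ for all $x,z$), so the minimal \dfa\ has a genuine rejecting sink. Deleting this sink and its in-transitions turns the minimal \dfa\ into an \nfa\ with $\alpha-1$ states accepting the same language, so the minimal \nfa\ would need at most $\alpha-1=n-1$ states, contradicting that $A$ is a minimal $n$-state \nfa. Hence $\alpha=n$ is possible only when $L(A)=\Sigma^*$, that is, only for $n=1$.
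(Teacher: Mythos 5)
Your high-level framing coincides with the paper's: both routes go through Theorem~\ref{thm:kao2007}~(2), build an all-initial-all-accepting automaton on top of the JJS-automaton, and your treatment of the boundary case $\alpha=n$ (a word outside an infix-closed language is dead, so the minimal \dfa\ has a rejecting sink whose removal yields an $(\alpha-1)$-state \nfa) is correct and, if anything, cleaner than the paper's remark. The problem is that the proposal stops exactly where the proof begins. You correctly isolate the central difficulty --- in the all-accepting regime subsets can only be separated by their death words --- and then write that the JJS bookkeeping ``has to be re-examined under the all-accepting constraint.'' That re-examination is the theorem. The paper's resolving device, absent from your proposal, is a fresh letter $\#$ whose \emph{only} transition is a self-loop $\delta_1(k,\#)=\{k\}$ on the unique accepting state $k$ of the JJS-automaton: acceptance is thereby converted into survival under $\#$, so $c^{k-q}\#$ separates any two subsets differing in a state $q\leq k$, and $\$$-words driving tail states down to state $1$ handle the rest. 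Without such a gadget, your claim that ``precisely $\alpha-2$ nonempty inner subsets become reachable and distinguishable'' is an assertion, not an argument, and it is not clear that a generic ``adaptation of the JJS construction'' admits enough death-word separation.

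Two further omissions. First, the theorem concerns minimal $n$-state \nfa s in the paper's single-initial-state model, so the multi-initial automaton $(Q,\Sigma,\delta,Q,Q)$ must be converted to an ordinary \nfa\ (the paper's $A_2$, with $\delta_2(n-1,a)=\bigcup_{q\in Q}\delta_1(q,a)$), and the minimality of that $n$-state \nfa\ must be proved; the paper does this with an explicit fooling set $S_1\cup S_2\cup S_3$ over the enlarged alphabet, whereas your proposal never argues that no smaller \nfa\ accepts the language. Second, your architecture (one state with no in-transitions plus a JJS core on the remaining $n-1$ states) differs from the paper's, which instead reuses the initial tail $n-1,\dots,k+1$ of the JJS-automaton as a $\$$-chain feeding into state $1$, so that the tail contributes $n-(k+1)$ \dfa\ states and the core contributes $2^k+m$; your variant is numerically plausible but would require its own reachability and distinguishability analysis, which is not supplied.
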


\begin{proof}
  For the second statement, note that each \dfa\ accepting a language
  $L\neq\Sigma^*$ needs a non-accepting state, which the minimal \nfa\
  cannot have, due to Theorem~\ref{thm:kao2007}. So, $\Sigma^*$ is the
  only infix-closed language, for which the size of the minimal \dfa\
  equals the size of an equivalent minimal \nfa. Both have a single
  state.  The case $\alpha=2^{n-1}+1$ is discussed in~\cite{BHK09a}.
  For the remaining, assume \mbox{$n<\alpha\leq 2^{n-1}$.}  In this
  case, the JJS-automaton $A_{n,\alpha}=(Q,\Sigma,\delta,n-1,\{k\})$
  has a non-empty initial tail of states, that is, the initial state
  is equal to state~$n-1$.  {F}rom $A_{n,\alpha}$ we construct an
  automaton $A_1=(Q,\Sigma\cup\{\#,\$\},\delta_1,Q,Q)$ with all states
  initial and accepting and transition function
  $\delta_1(k,\#)=\{k\}$, $\delta_1(q,\$)=\{q-1\}$ if \mbox{$k+2\leq q\leq
  n-1$}, $\delta_1(k+1,\$)=\{1\}$ and $\delta_1(q,a)=\delta(q,a)$ for
  $0\leq q\leq k$ and $a\in\Sigma$.  This \nfa\ with multiple initial states
  can be converted into an equivalent \nfa\ $A_2$ with initial state
  $n-1$ and the transition function $\delta_2(n-1,a)=\bigcup_{q\in Q}
  \delta_1(q,a)$ and $\delta_2(q,a)=\delta_1(q,a)$ for all
  $a\in\Sigma\cup\{\#,\$\}$ and $q\in Q\setminus\{n-1\}$.

	With $S_1=\{\,(\$^i,\$^{n-(k+1)-i}c^{k-1}) \mid 0 \leq i \leq
  n-(k+1)\,\}$, $S_2=\{\,(\$^{n-(k+1)}c^i,c^{k-1-i}) \mid 1 \leq i
  \leq k-1\,\}$ and $S_3=\{(\$^{n-(k+1)}d,c^{k})\}$, one can easily
  check, that $S=S_1\cup S_2\cup S_3$ is a fooling set for $L(A_2)$:
  Different pairs from~$S_1$ result in a word beginning with more than
  $n-(k+1)$ $\$$-symbols, pairs from $S_2$ result in too many
  $c$-symbols,~$c^k$ from $S_3$ cannot be combined with any other word
  and mixing pairs from~$S_1$ and~$S_2$ either results in a word
  containing the infix $\$c^i \$$ or, if $(\$^{n-(k+1)}, c^{k-1})$ is
  chosen from $S_1$, in $\$^{n-(k+1)} c^{i+k-1}$, which ends with too
  many $c$-symbols.

  In the corresponding powerset automaton $A_2'$, by reading prefixes
  of $\$^{n-(k+1)}$, one reaches \mbox{$n-(k+1)$} states $\{n-1\}$,
  $\{n-2,\dots,k+1,1\}$, \dots, $\{k+1,1\}$. After reading
  $\$^{n-(k+1)}$, $A_2'$ is in state $\{1\}$ and from there, according
  to the JJS-construction, $2^k + m$ states from $2^{\{0,1,\dots,k\}}$
  are reachable. So we have exactly~$\alpha$ states.  To see that no
  further states can be reached, note that the transition function
  differs from the one of the JJS-automaton only in states $k+1,\dots,
  n-1$ and state $k$.  The $\#$-transition in state $k$ gives no new
  reachable states and reading~$\$$ always leads to either a state
  $\{n-i,\dots,k+1,1\}$, for some $1\leq i\leq n-k+1$, or to state
  $\{1\}$ or the empty set.  So, the only interesting transitions are
  those of the initial state $\{n-1\}$ on the input symbols $a$, $b$,
  $c$ and $d$. Reading~$a$ or~$b$ leads to $\{0,\dots, k\}$, reading
  $c$ to $\{1,\dots,k\}$ and on input~$d$,~$A_2'$ enters the state
  $\delta(q,d)$ for the largest $q\in Q$ for which this transition is
  defined. All these states were already counted.

  To prove that any two distinct states $M,N\subseteq
  Q\setminus\{n-1\}$ are pairwise inequivalent, without loss of
  generality, pick an element $q\in M\setminus N$.  If $q \leq k$, the
  word $c^{k-q}\#$ distinguishes $M$ and $N$. Otherwise, if~$q\geq
  k+1$, one can drive it to state $1$ by reading $\$$-symbols, and
  then $c^{k-1}$ distinguishes the two states.

  Finally, state $\{n-1\}$ is inequivalent with any state $N\subseteq
  Q\setminus\{n-1\}$ by the input word $\$^{n-(k+1)}c^{k-1}$.
\end{proof}

The family of infix-closed languages is a subset of the family of
suffix-closed languages, so the previous theorem generalizes to the latter
language family, except for $n$ which is not magic for $n\geq 1$ anymore:

\begin{corollary}\label{cor:suffix-closed}
  For all integers~$n$ and~$\alpha$ such that $n\leq \alpha\leq
  2^{n-1}+1$, there exists an $n$-state nondeterministic finite
  automaton accepting a \emph{suffix-closed} language whose equivalent
  minimal deterministic finite automaton has exactly~$\alpha$ states.
\end{corollary}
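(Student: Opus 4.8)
The plan is to deduce Corollary~\ref{cor:suffix-closed} from Theorem~\ref{thm:infix-closed} using the containment of infix-closed languages in suffix-closed languages, and then separately handle the single new value $\alpha=n$ that the corollary claims but the theorem excluded (for $n\geq 2$). First I would note that every infix-closed language is suffix-closed: if $L$ is infix-closed and $yz\in L$, then taking the infix-condition witness $x=\lambda$ gives $z\in L$, so the defining implication for suffix-closedness holds. Hence every witness automaton produced by Theorem~\ref{thm:infix-closed} already accepts a suffix-closed language, and this immediately covers the range $n<\alpha\leq 2^{n-1}+1$ for the corollary. The upper bound $2^{n-1}+1$ matches the known worst-case deterministic blow-up for suffix-closed languages mentioned in the text, so nothing above that range needs attention.

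The remaining work is the value $\alpha=n$, which Theorem~\ref{thm:infix-closed} rules out for all $n\geq 2$ in the infix-closed case but which the corollary asserts is achievable for suffix-closed languages. The obstruction for infix-closed languages was that every minimal \nfa\ for such a language has all states accepting (Theorem~\ref{thm:kao2007}), forcing any nontrivial \dfa\ to add a rejecting sink and thus strictly more states. For suffix-closed languages this structural restriction is weaker, so I would exhibit a direct witness. The natural idea is to use a suffix-closed language whose minimal \nfa\ already \emph{is} deterministic with exactly $n$ states, giving $\alpha=n$ for free. A clean candidate is the language $L=\Sigma^*$ over a suitable alphabet refined so that its minimal \dfa\ has $n$ states: for instance, over a one-letter alphabet $\{a\}$ take $L=\{a^i\mid i\geq 0,\ i\not\equiv n-1 \pmod{?}\}$-type constructions are awkward, so instead I would prefer the suffix-closed language $L_n$ of all words over $\{a\}$ of length not exceeding some bound, or more robustly the cofinite/periodic construction whose minimal \dfa\ is a single cycle-free chain.

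The cleanest witness I would actually present is a tail-chain \dfa: over the alphabet $\{a\}$, let $A$ be the $n$-state \dfa\ accepting $L=\{a^i\mid 0\leq i\leq n-1\}\cup\{a^i\mid i\geq n-1\}$ suitably chosen so that $L$ is suffix-closed and $A$ is minimal and already deterministic; then the minimal \dfa\ and the minimal \nfa\ coincide and both have exactly $n$ states. Verifying suffix-closedness amounts to checking that whenever $yz\in L$ we have $z\in L$, which for a downward-or-periodic unary language is a short direct check, and minimality of the unary \dfa\ is routine via the standard Myhill--Nerode argument on the distinct residues.

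The main obstacle I anticipate is not the inclusion argument, which is immediate, but pinning down a single explicit suffix-closed witness for $\alpha=n$ that is simultaneously (i) genuinely suffix-closed, (ii) accepted by a minimal \nfa\ of exactly $n$ states, and (iii) such that its minimal equivalent \dfa\ also has exactly $n$ states. Because a minimal \nfa\ for $\alpha=n$ must essentially be deterministic, the whole difficulty collapses to producing an $n$-state minimal \dfa\ whose language is suffix-closed; once such a \dfa\ is in hand, the \nfa\ equals it and $\alpha=n$ follows trivially. I expect the verification of suffix-closedness of the chosen family to be the only nonroutine point, and I would keep that check short by selecting a language, such as $\Sigma^{\leq n-1}$-style prefixes under a unary alphabet, whose suffix-closure is visibly inherited from closure under taking trailing segments.
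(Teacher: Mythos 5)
Your overall route is the same as the paper's: the inclusion of infix-closed languages in suffix-closed languages transfers all witnesses of Theorem~\ref{thm:infix-closed} for $n<\alpha\leq 2^{n-1}+1$, and only $\alpha=n$ needs a fresh witness. (Minor slip: to derive suffix-closure from infix-closure you must take the \emph{right} factor of the infix decomposition to be $\lambda$, writing $yz=(y)(z)(\lambda)$; setting the left factor $x=\lambda$ as you wrote yields prefix-closure instead. The inclusion itself is of course correct.)

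The genuine gap is your witness for $\alpha=n$. Every witness you sketch is unary, and no unary witness can exist for $n\geq 2$: over $\{a\}$, suffix-closedness means exactly that the set of lengths is downward closed, so the only unary suffix-closed languages are $\emptyset$, $a^*$, and $\{a^i\mid 0\leq i\leq N\}$. For the finite ones the minimal \nfa\ is a chain of $N+1$ accepting states while the minimal \dfa\ must add a rejecting sink, giving $N+2$ states; so every nontrivial unary suffix-closed language has $\alpha=n+1$, never $\alpha=n$. (Your first candidate $\{a^i\mid 0\leq i\leq n-1\}\cup\{a^i\mid i\geq n-1\}$ is just $a^*$, and cofinite unary languages other than $a^*$ are not suffix-closed at all.) The difficulty is precisely that a suffix-closed language other than $\Sigma^*$ whose minimal \dfa\ has no superfluous dead state cannot live over a one-letter alphabet. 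A correct witness needs at least two letters, e.g.\ $L_n=\{\,w\in\{a,b\}^*\mid b^{n-1} \mbox{ is not a suffix of } w\,\}$: it is suffix-closed (a suffix of $w$ of length at least $n-1$ ends in the same last $n-1$ letters as $w$, and shorter suffixes cannot end in $b^{n-1}$), its minimal \dfa\ has exactly $n$ states $q_0,\dots,q_{n-1}$ tracking the length of the maximal $b$-suffix with only $q_{n-1}$ rejecting, and the extended fooling set $\{(b^i,b^{n-2-i})\mid 0\leq i\leq n-2\}\cup\{(b^{n-1},a)\}$ of size $n$ shows the minimal \nfa\ also has $n$ states. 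With such a witness in place of the unary one, your argument goes through.
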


Since the upper bound for the deterministic blow-up of prefix-closed
languages is greater than that of infix-closed languages, we need to
treat them separately here.
  
\begin{theorem}\label{thm:prefix-closed}
  For all integers~$n$ and~$\alpha$ such that $n< \alpha\leq 2^n$,
  there exists an $n$-state nondeterministic finite automaton
  accepting a \emph{prefix-closed} language
  whose equivalent minimal deterministic finite automaton has
  exactly~$\alpha$ states. The case $\alpha=n$ can only be reached for
  $n=1$.
\end{theorem}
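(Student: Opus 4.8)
The plan is to: We need to show, for prefix-closed languages, that every $\alpha$ with $n < \alpha \leq 2^n$ is reachable (the full range up to $2^n$, unlike the infix/suffix-closed case which caps at $2^{n-1}+1$). By Theorem~\ref{thm:kao2007}(1), a nonempty regular language is prefix-closed if and only if it is accepted by some NFA with all states accepting. So the whole game is to start from the JJS-automaton $A_{n,\alpha}$ and manufacture an equivalent (in the sense of yielding exactly $\alpha$ DFA states) NFA in which \emph{every} state is accepting, while controlling the reachable subsets and their pairwise inequivalence.

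First I would handle the boundary cases separately: $\alpha = 2^n$ is treated by a known witness (cited from \cite{BHK09a} as in the pattern of the earlier proofs), and the case $\alpha = n$ is disposed of exactly as in Theorem~\ref{thm:infix-closed} — a DFA for a language $L \neq \Sigma^*$ needs a rejecting state that an all-accepting NFA cannot have, so $\Sigma^*$ (with one state) is the only prefix-closed language whose minimal DFA matches a minimal NFA, forcing $n = 1$. For the generic range $n < \alpha < 2^n$ (and $\alpha \leq 2^{n-1}$ versus the upper stretch $2^{n-1} < \alpha < 2^n$ possibly needing a split), I would mimic the construction in the proof of Theorem~\ref{thm:infix-closed}: take the JJS-automaton $A_{n,\alpha} = (Q, \Sigma, \delta, n-1, \{k\})$ with its initial tail of states $n-1, n-2, \dots, k+1$, add fresh symbols (a $\$$-symbol to traverse the tail and a $\#$-symbol to mark the accepting state $k$), make all states accepting, and verify that the powerset automaton still reaches precisely the $n-(k+1) + 2^k + m = \alpha$ states guaranteed by the JJS-decomposition.

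Second, since prefix-closed NFAs have all states accepting (no separate sink state), the reachability and distinguishability arguments run as in Theorem~\ref{thm:infix-closed}: reading prefixes of $\$^{\,n-(k+1)}$ walks through the tail states $\{n-1\}, \{n-2,\dots,k+1,1\}, \dots, \{1\}$, and from $\{1\}$ the JJS-part contributes its $2^k + m$ subsets of $2^{\{0,1,\dots,k\}}$. I would exhibit a fooling set (adapted from the $S_1 \cup S_2 \cup S_3$ pattern) to certify the NFA is minimal with $n$ states, and check pairwise inequivalence of distinct reachable subsets using distinguishing words of the form $c^{k-q}\#$ for elements $q \leq k$ and $\$$-then-$c^{k-1}$ for tail elements, together with a separate word isolating the initial state $\{n-1\}$.

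The main obstacle is the extended range: because prefix-closed languages reach all the way to $2^n$ (whereas infix/suffix-closed stop at $2^{n-1}+1$), the all-accepting constraint must \emph{not} collapse any of the top-level subsets, so I would need to confirm that making the initial tail states accepting does not merge any states of the target DFA — concretely, that no two distinct reachable subsets become equivalent once every subset counts as accepting. This is the delicate point, since in the earlier closed-language theorems the ceiling dropped precisely because all-accepting automata cannot distinguish certain configurations; here one must verify that the JJS-decomposition, adapted to the all-accepting setting, still separates the full $2^n$-sized range, with the extreme $\alpha$ near $2^n$ requiring the most careful bookkeeping.
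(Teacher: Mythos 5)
The paper itself omits the proof of Theorem~\ref{thm:prefix-closed} (it is one of the proofs dropped for space), so your proposal can only be measured against the analogous argument the paper does give, namely the proof of Theorem~\ref{thm:infix-closed}. Your overall strategy is the right one and matches that template: invoke Theorem~\ref{thm:kao2007}(1) to reduce the problem to building an all-accepting $n$-state \nfa, start from the JJS-automaton, use a \mbox{$\$$-letter} to walk the initial tail and a \mbox{$\#$-loop} on the old accepting state~$k$ so that membership of~$k$ in a subset can be tested by emptiness, and certify minimality with a fooling set. The $\alpha=n$ and $\alpha=2^n$ boundary cases are also handled in the expected way (for $\alpha=n$ the cleanest phrasing is that deleting the rejecting sink of the minimal \dfa\ yields a strictly smaller \nfa\ whenever $L\neq\Sigma^*$; note that Theorem~\ref{thm:kao2007} only guarantees that \emph{some} all-accepting \nfa\ exists, not that the minimal one is all-accepting).

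The genuine gap is that you leave unresolved exactly the part that makes the prefix-closed case different from the infix-closed one: the range $2^{n-1}<\alpha< 2^n$. You flag it as ``the delicate point'' and say it requires ``the most careful bookkeeping,'' but you give neither the construction nor the argument there, and a reader cannot reconstruct it from what you wrote. In fact this range is not harder but \emph{easier} than the lower one: for $2^{n-1}<\alpha<2^n$ the JJS decomposition forces $k=n-1$, so the initial tail is empty, the $\$$-letter is not needed at all, and the construction degenerates to ``make every state of $A_{n,\alpha}$ accepting and add $\delta(k,\#)=\{k\}$.'' The observation that closes the argument (and that you should state explicitly, since it is also what justifies the lower range) is this: in an all-accepting \nfa\ the only rejecting state of the powerset automaton is $\emptyset$, so two subsets are distinguishable if and only if some word sends one to $\emptyset$ and not the other; since $\delta'(P,w\#)\neq\emptyset$ exactly when $k\in\delta'(P,w)$, every pair of subsets that the original JJS powerset automaton separates by acceptance is separated in the new automaton by the same word followed by $\#$, while reachability is unchanged on $\Sigma$ and the letter $\#$ only reaches $\{k\}$ and $\emptyset$. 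Hence the state count is exactly preserved at~$\alpha$. Without this observation, your worry that ``making all states accepting collapses subsets'' remains unanswered, and the proof is incomplete precisely where the theorem's extended upper bound of $2^n$ has to be earned.
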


\subsection{Finite Languages}
\label{sec:finite}

For finite languages the magic number problem turns out to be more
challenging which seems to coincide with the fact, that the upper
bounds for the deterministic blow-up of finite languages differ much
from these of infinite language families.  In~\cite{SaYu97a} it was
shown that for each $n$-state \nfa\ over an alphabet of size $k$,
there is an equivalent \dfa\ with at most $O(k^{n/(\log(k)+1)})$
states.  This matches an earlier result of~$O(2^{n/2})$ for finite
languages over binary alphabets~\cite{Ma73}.

In this section we give some partial results for finite languages over
a binary alphabet, that is, we show that a roughly quadratic interval
beginning at $n+1$ contains only non-magic numbers and that numbers of
some exponential form $2^{(n-1)/2}+2^i$ are non-magic, too.  Note that
for finite languages, $n$ is a trivial magic number, since any \dfa\
needs a non-accepting sink state which is not necessary for an~$\nfa$.

\begin{theorem}
\begin{sloppypar}
  For all integers~$n$ and~$\alpha$ such that $n+1\leq \alpha\leq
  (\frac{n}2)^2 + \frac{n}2+1$ if~$n$ is even, and \mbox{$n+1\leq\alpha\leq
  (\frac{n-1}2)^2 + n +1$} if $n$ is odd, there exists an $n$-state
  nondeterministic finite automaton accepting a \emph{finite} language
  over a binary alphabet whose equivalent minimal deterministic finite
  automaton has exactly~$\alpha$ states.
\end{sloppypar}
\end{theorem}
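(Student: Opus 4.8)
The plan is to exhibit, for every target~$\alpha$ in the stated interval, an explicit acyclic (hence finite-language) $n$-state \nfa\ over the alphabet~$\{a,b\}$, together with a proof that it is a minimal \nfa\ and that its minimized powerset \dfa\ has exactly~$\alpha$ states. The guiding observation is that the claimed upper bound equals $\lfloor\frac{n+1}{2}\rfloor\cdot\lceil\frac{n+1}{2}\rceil+1$, that is, the largest product of two positive integers summing to~$n+1$, augmented by one. The ``$+1$'' is the unavoidable dead state of any \dfa\ accepting a finite language, while the product signals a two-dimensional, \emph{triangular} reachable region in the powerset construction. This decomposition simultaneously explains the even/odd split: the optimal factorization of~$n+1$ depends on the parity of~$n+1$.

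For the extremal value I would use the \emph{spreading automaton}~$S_n$ on states $0,1,\dots,n-1$, with initial state~$0$, accepting set~$\{n-1\}$, and transitions $\delta(i,a)=\{i+1\}$ and $\delta(i,b)=\{i+1,i+2\}$ whenever the targets exist (truncated near~$n-1$). Every transition strictly increases the state index, so $S_n$ is acyclic and $L(S_n)$ is finite. A short induction shows that the reachable subsets are precisely the intervals $[l,r]$ obtained from $[0,0]$, characterized by the invariants $r-l\le l$ (the number of $b$'s never exceeds the word length) and $r\le n-1$, i.e.\ $r-l\le\min(l,n-1-l)$; reading any symbol advances~$l$ by one and reading~$b$ advances~$r$ by an extra one. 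Counting gives $\sum_{l=0}^{n-1}(\min(l,n-1-l)+1)=\lfloor\frac{n+1}{2}\rfloor\lceil\frac{n+1}{2}\rceil$, and with the dead state this is exactly the bound. One then checks that distinct intervals have distinct right languages (e.g.\ $a^{n-1-l}$ separates intervals with different left endpoints), so the \dfa\ is already minimal, and that $\{(a^i,a^{n-1-i})\mid 0\le i\le n-1\}$ is a fooling set of size~$n$ (for $i>j$ the word $a^i\,a^{n-1-j}$ is too long to lie in $L(S_n)$), forcing $S_n$ to be minimal.

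To realize every intermediate value down to~$n+1$ I would degrade~$S_n$ by two parameters while keeping all~$n$ states in use. First, restrict the spreading transition $\delta(i,b)=\{i+1,i+2\}$ to an initial block of positions and use the plain advance $\delta(i,b)=\{i+1\}$ elsewhere; this caps the attainable interval width and trims the triangular region along its diagonals. Second, prepend a tail of states on which both letters act as a plain advance, which shifts and truncates the triangle from below. Letting the tail length and the number of spreading positions range over their admissible values yields a family of minimal $n$-state \nfa s for finite languages whose minimized \dfa s realize a spectrum of sizes; at the two extremes one recovers $n+1$ (no spreading, empty tail) and the full bound (maximal spreading, empty tail). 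For each member the verification obligations---finiteness via acyclicity, an exact count of reachable intervals, pairwise distinguishability, and a size-$n$ fooling set---are of the same flavor as for~$S_n$ but must be carried out uniformly across the family.

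The main obstacle is achieving \emph{unit granularity}. Over a binary alphabet a single local change to the transition function typically adds or removes an entire diagonal of the triangular reachable region rather than one state, so each of the coarse families above advances in jumps of size roughly $n-2c$ and leaves gaps. The crux of the proof is therefore to interleave the two parameters---and, where a gap remains, to shave the widest intervals at the tip of the triangle one at a time---so that the attainable \dfa\ sizes cover \emph{every} integer in the interval $[\,n+1,\ \lfloor\frac{n+1}{2}\rfloor\lceil\frac{n+1}{2}\rceil+1\,]$, while guaranteeing that every automaton in the family is simultaneously a minimal \nfa\ and yields a minimal \dfa\ of exactly the intended size. This binary-alphabet rigidity is precisely why only a quadratic sub-interval, rather than the whole range up to the $2^{\lfloor n/2\rfloor}$ worst case, is shown to be non-magic for finite languages.
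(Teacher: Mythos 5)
There is a genuine gap: the theorem's content is that \emph{every} integer $\alpha$ in the quadratic interval is realized, and your proposal only actually constructs and verifies a witness for the extremal value $\alpha=\lfloor\frac{n+1}{2}\rfloor\lceil\frac{n+1}{2}\rceil+1$. Your spreading automaton $S_n$ (with $\delta(i,b)=\{i+1,i+2\}$, reachable sets forming contiguous intervals $[l,r]$) is a nice and correct witness for that top value, and it is a genuinely different automaton from the paper's, whose reachable sets have the shape ``singleton plus a suffix interval $\{2q{+}1,\dots,n\}$.'' But for the intermediate values you only offer a plan (restrict spreading to an initial block, prepend a tail, then ``shave the widest intervals at the tip one at a time''), and you yourself concede that the first two parameters move the count in jumps of size roughly $n-2c$. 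The shaving step is exactly the crux, and it is not at all clear it can be done: over a binary alphabet you cannot delete a single reachable interval by a local change to $\delta$ without collapsing or creating other reachable sets, and every such modification must simultaneously preserve acyclicity, the minimality of the $n$-state \nfa, and pairwise inequivalence of the surviving subsets. Asserting that this can be ``carried out uniformly across the family'' is not a proof.

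For comparison, the paper achieves unit granularity with a dedicated device rather than by perturbing a maximal construction: it fixes $k$ as the largest index with $\alpha>1+\sum_{i=0}^{k}(n-2i)$ and sets $m=\alpha-1-\sum_{i=0}^{k}(n-2i)$; states $1,\dots,k$ carry the full spreading transition $\delta(q,a)=\{q{+}1,2q{+}1,\dots,n\}$, each contributing a whole block of $n-2(i{+}1)$ reachable sets, while the single state $k{+}1$ carries the tailored transition $\delta(k{+}1,a)=\{k{+}2,\,n{-}(m{-}1),\dots,n\}$, which contributes \emph{exactly} $m$ further reachable sets (one for each $j$ with $0\le j\le m-1$). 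It is this one adjustable transition, whose target suffix has prescribed length, that converts the coarse quadratic sum into every integer in between; the proof then still has to check that the two families of reachable sets never coincide. Your proposal would need an analogous mechanism---a single parameterized transition contributing a prescribed number $m$ of new subsets---before it could be considered a proof; as written, the interpolation argument is missing.
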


\begin{proof}
  The case $\alpha=n+1$ can be seen with the witness language
  $\{a,b\}^n$. So, assume $n+1<\alpha$.  Then there exist integers $k$
  and $m$ such that
$$k = \max\{\, x\geq 0 \mid \alpha > 1 +\sum_{i=0}^x
n-2i\}\quad\mbox{and}\quad m = \alpha - 1 - \sum_{i=0}^k n-2i. $$ Let
$A=(\{1,\dots,n\},\{a,b\},\delta,1,\{n\})$ be an \nfa\ with
$\delta(q,a) =\{q+1,2q+1,2q+2,\dots,n\}$, \mbox{$1\leq q \leq k$},
$\delta(k+1,a) = \{(k+1)+1,n-(m-1),n-(m-1)+1,\dots,n\}$,
$\delta(q,a) =\{q+1\}$, for $k+1< q < n$, and $\delta(q,b) = \{q+1\}$, for $q < n$.

The transitions on $b$ ensure the minimality of $A$ and the
inequivalence of states in the corresponding powerset automaton $A'$.
To count all reachable states of $A'$, we partition the set
$\{a,b\}^*$ as follows:
\[\{a,b\}^* = \bigcup_{i=0}^{k} \{b^ia\}\{a,b\}^* \cup
\{b^{k+1}\}\{a,b\}^* \cup \{b\}^*.\] With words from $\{b\}^*$, the
singletons $\{1\},\dots,\{n\}$ and $\emptyset$ are reachable---which
gives $n+1$ states.  Next, let $w=b^i a w'$ and $w'\in\{a,b\}^j$ for
some integers $0\leq i\leq k-1$ and $j\geq k$. Then
  \begin{align}
  \delta'(\{1\},w) &=  \delta'(\{i+1\},aw') \nonumber \\
  &=  \delta'(\{i+2, 2(i+1)+1, 2(i+1)+2,\dots,n\},w') \nonumber \\
  &=  \{i+j+2, 2(i+1)+j+1, 2(i+1)+j+2,\dots,n\}. \label{finite:delta_w_1}
  \end{align} 
  Since we already counted the singleton sets and the empty set, we
  have to count sets of the form~\eqref{finite:delta_w_1} having at
  least two elements. We conclude that the set $\{i+j+2, 2(i+1)+j+1,
  2(i+1)+j+2,\dots,n\}$ has cardinality at least~$2$ if and only if we
  have $\{2(i+1)+j+1, 2(i+1)+j+2,\dots,n\} \neq \emptyset$, which in
  turn holds if and only if $0\leq j \leq n-2(i+1) -1$.
  So, there are $n-2(i+1)$ states reachable for a fixed $i\leq k-1$,
  which gives $\sum_{i=1}^k n-2i$ states that are reachable by reading
  words from $\{b^ia\}\{a,b\}^j$ with $0\leq i\leq k-1$.

  Now let $w=b^k a w'$ for some $w'\in\{a,b\}^j$. Then
  \begin{align}
  \delta'(\{1\},w) &=  \delta'(\{k+1\},aw') \nonumber\\
  &=  \delta'(\{k+2, n-(m-1), n-(m-1)+1,\dots,n\},w')\nonumber \\
  &=  \{k+j+2, n-(m-1)+j, n-(m-1)+j+1,\dots,n\}. \label{finite:delta_w_2}
  \end{align}
  These sets contain at least two elements if and only if $0\leq j\leq
  m-1$. Therefore, exactly~$m$ sets of the
  form~\eqref{finite:delta_w_2} are reachable in $A'$.  Summing up, we
  get
  $$m+n+1+\sum_{i=1}^k n-2i= \alpha-1 -\sum_{i=0}^k n-2i + 1
  +\sum_{i=0}^k n-2i = \alpha$$ states.

  To see that we have not multiply counted any state, note that there
  is no reachable set of states that satisfies
  \eqref{finite:delta_w_1} and \eqref{finite:delta_w_2}: Assume for
  some integers $i,j,j',k$ with $i<k$ that
  \[\{i+j+2,2(i+1)+j+1,\dots,n\}=\{k+j'+2,n-(m-1)+j',\dots,n\}.\]
  Then of course $i+j=k+j'$, and by definition of $k$ and $m$ we have
  $1\leq m \leq n-2(k+1) -1$. Finally, we derive
  \begin{multline*}
  n-(m-1)+j' \geq n- n+2(k+1) +1 +j'
  = k + k + j' +4\\
  > i + k + j' +4
  = i + i + j +4
  > 2(i+1) +j +1
  \end{multline*}
  which is a contradiction to the assumption.
\end{proof}

For our last theorem we use the following results presented
in~\cite{Ma73}: For an integer $n$ let $k=\lceil\frac{n}2\rceil$ and
$A_n=(\{1,\dots,n\},\{a,b\},\delta,1,\{n\})$ be an \nfa\ with
transitions $\delta(q,a)=\{q+1,k+1\}$ if $q<k$, \mbox{$\delta(q,a)=\{q+1\}$}
if $k\leq q <n$, and $\delta(q,b)=\{q+1\}$ if $q<n$ and $q\neq k$.
Then in~\cite{Ma73} it is shown that $A_n$ is minimal and that the
minimal equivalent \dfa\ has exactly $2^{(n/2)+1}-1$ states if $n$ is
even, and~$3\cdot 2^{(n+1)/2 - 1} -1$ states if $n$ is odd.  This
minimal \dfa\ spans a binary tree on inputs $a$ and $b$ as depicted in
Figure~\ref{fig:mandl-nfa}.
\begin{figure}[!hbt]
  \centering
  \includegraphics[scale=0.80]{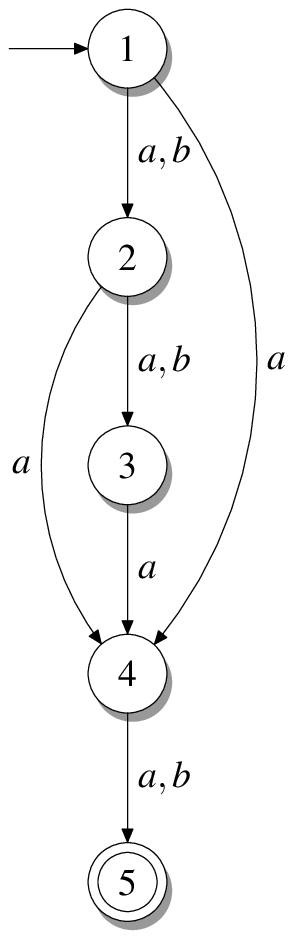}\includegraphics[scale=0.80]{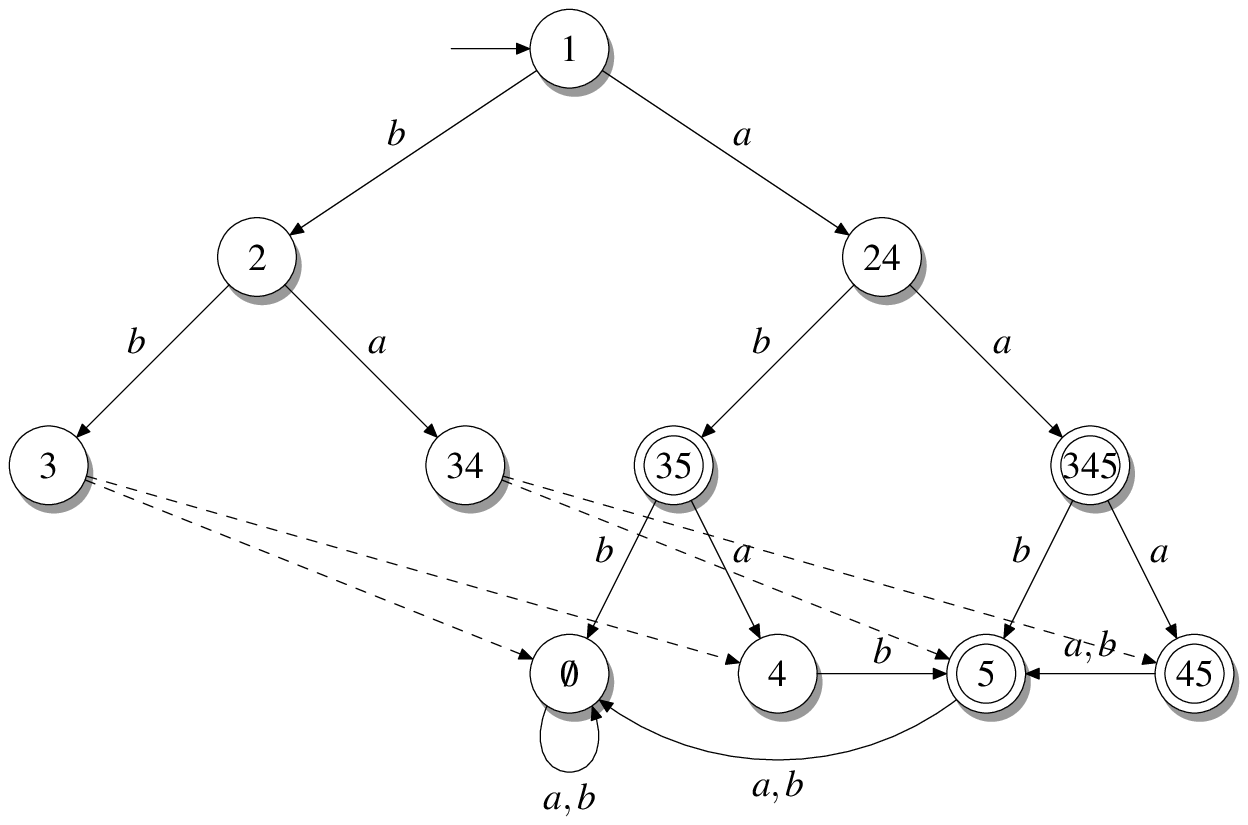}
  \caption{The $\nfa$ $A_n$ from~\cite{Ma73} and its powerset
    automaton that builds a binary tree.  In the \dfa\ on the right
    the transitions of states $\{3\}$ and $\{3,4\}$ are the same as
    for $\{3,5\}$ and $\{3,4,5\}$, respectively.}
  \label{fig:mandl-nfa}
\end{figure}

\begin{theorem}
  For all integers~$n$ and~$\alpha$ such that $\alpha=3\cdot 2^{(n/2)
    - 1} +\beta$ if $n$ is even and $\alpha= 2^{(n+1)/2} +\beta$ if~$n$
 is odd, with $\beta = 2^i -1$ for some integer \mbox{$1\leq i
    \leq \lceil\frac{n-1}2\rceil$,} there exists an $n$-state
  nondeterministic finite automaton accepting a \emph{finite} language
  over a binary alphabet whose equivalent minimal deterministic finite
  automaton has exactly~$\alpha$ states.
\end{theorem}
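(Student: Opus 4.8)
The plan is to obtain the required automata by perturbing the reset transitions of the Mandl automaton $A_n$ so that its full binary computation tree survives except for one controllable \emph{branching window} of depth~$i$. Recall that with $k=\lceil n/2\rceil$ the only nondeterminism in $A_n$, and hence the only source of exponential blow-up, sits in the reset transitions $\delta(q,a)=\{q+1,k+1\}$ for $q<k$. For a parameter $i$ with $1\leq i\leq k-1$ I would keep the reset to $k+1$ only for the $i$ branching states $q\in\{k-i,\dots,k-1\}$ lying immediately below~$k$, and redirect the reset of every lower branching state $q\in\{1,\dots,k-i-1\}$ to the final state~$n$, leaving all other transitions (in particular the deterministic $a$-tail and the $b$-transitions) as in $A_n$. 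Call the resulting $n$-state \nfa\ $A_n^{(i)}$. Because every transition still leads to a strictly larger state, $A_n^{(i)}$ is acyclic and accepts a finite language over $\{a,b\}$, and since the $b$-chain witnessing minimality of $A_n$ is untouched, $A_n^{(i)}$ is again a minimal $n$-state \nfa.

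I would first anchor the interval by checking the two extremes (for odd~$n$ the value $i=k-1$ is the stated bound $\lceil(n-1)/2\rceil$, while for even~$n$ the largest value keeping $\alpha\leq 2^{n/2+1}-1$ is $k-1=n/2-1$). For $i=k-1$ no reset is redirected, so $A_n^{(i)}=A_n$ and the powerset automaton has the full Mandl count $3\cdot 2^{(n+1)/2-1}-1$ (odd~$n$) or $2^{n/2+1}-1$ (even~$n$); a direct calculation rewrites these as $2^{(n+1)/2}+2^{k-1}-1$ and $3\cdot 2^{n/2-1}+2^{k-1}-1$, matching the claimed formula at the top of the range. For $i=1$ only the single state $k-1$ keeps a long reset, and the construction yields the base values $2^{(n+1)/2}+1$ and $3\cdot 2^{n/2-1}+1$.

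The core of the proof is the exact state count for general~$i$. In the bookkeeping style of the preceding theorem I would partition $\{a,b\}^{*}$ by the position of the first~$a$ and describe every reachable subset of $(A_n^{(i)})'$ explicitly as a ``main thread'' together with the reset threads still alive. A redirected reset injects its thread directly into the accepting state~$n$, where it dies one step later; such threads enlarge the fixed part of the tree --- accounting for the leading term, with the parity-dependent constant $2$ or $3$ coming from the shape of the tree near the root, where state~$k$ has no $b$-transition --- but never start a long-lived branch. The $i$ surviving long resets, by contrast, span a complete binary subtree of depth~$i$ and hence contribute exactly $2^{i}-1$ additional states. I would then show the two families of subsets are disjoint, so that their sizes add to the asserted~$\alpha$.

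Finally I would prove that all reachable subsets are pairwise inequivalent, so that $(A_n^{(i)})'$ is already minimal: for any two distinct reachable sets $M\neq N$ one picks $q$ in their symmetric difference and, using the forward $b$-transitions together with the deterministic $a$-tail beyond~$k$, drives exactly one of $M,N$ to the accepting state~$n$. The step I expect to be the main obstacle is precisely this explicit description together with the disjointness claim: keeping the interaction between the fixed base tree and the tunable branching window under control, and verifying that redirected resets neither coincide with surviving branches nor create unintended mergers, so that the total comes out as exactly $2^{(n+1)/2}+2^{i}-1$ (odd) or $3\cdot 2^{n/2-1}+2^{i}-1$ (even), with neither over- nor undercounting.
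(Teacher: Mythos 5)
There is a genuine gap, and it sits exactly where you predicted the main obstacle would be: the state count. Your construction keeps only the top $i$ resets of the Mandl automaton $A_n$ and redirects the lower ones to the dead-end state $n$, on the premise that the ``fixed part of the tree'' still supplies the leading term $2^{(n+1)/2}$ (resp.\ $3\cdot 2^{n/2-1}$) while the surviving window adds $2^i-1$. The premise is backwards. In $A_n$ the exponential leading term is produced precisely by the full \emph{cascade} of resets into $k+1$: each one keeps a long-lived thread contributing one bit to the reachable subset. A reset redirected to $n$ spawns a thread that dies after one step and contributes only a constant number of extra subsets (those containing $n$), so the remaining count is on the order of $2^{i+1}+O(n)$, not $2^{(n+1)/2}+2^i-1$. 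Concretely, for $n=7$, $i=1$ your $A_7^{(1)}$ has $\delta(1,a)=\{2,7\}$, $\delta(2,a)=\{3,7\}$, $\delta(3,a)=\{4,5\}$, and its powerset automaton reaches only the $13$ subsets $\{1\},\{2\},\{2,7\},\{3\},\{3,7\},\{4\},\{4,5\},\{5\},\{5,6\},\{6\},\{6,7\},\{7\},\emptyset$; since reachability already bounds the minimal \dfa\ size, you cannot get the required $\alpha=2^4+2^1-1=17$ (for $i=2$ one gets $17$ instead of $19$). Your anchor at $i=k-1$ is correct only because nothing is pruned there. A second, independent problem is the range: since $A_n^{(i)}$ is a weakening of $A_n$, its blow-up can never exceed Mandl's $2^{n/2+1}-1$, yet for even $n$ the theorem's top value $i=\lceil(n-1)/2\rceil=n/2$ demands $\alpha=5\cdot 2^{n/2-1}-1>2^{n/2+1}-1$; you silently cap $i$ at $n/2-1$ and so do not cover the statement as given.

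The paper's construction is essentially the mirror image of yours and avoids both problems. It leaves the $(n-1)$-state Mandl automaton $A_{n-1}$ completely intact --- all of $A_{n-1}'$ plus one new initial state is what supplies the leading term, and starting from $n-1$ states rather than $n$ is also what makes the even-$n$ top value reachable --- and grafts a \emph{single} additional nondeterministic transition $\delta(0,a)=\{1,x\}$ from a fresh initial state into the deterministic tail at a tunable depth $x=n+1-\log(\beta+1)$. The extra thread launched at $x$ survives for the next $n-x$ symbols, and every word $aw$ within that window reaches a subset that is new because it still carries this thread, giving exactly $2^{n-x+1}-1=\beta$ additional states. If you want to salvage your idea, the tunable quantity has to be the lifetime of one \emph{added} branch, not the number of surviving branches of the original cascade.
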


\begin{proof}
  Let $n,\alpha$ and $\beta$ be as required and $x=n+1-\log(\beta+1)$.
  We construct a minimal automaton~$B_{n,\beta}$ adapting
  $A_{n-1}=(\{1,\dots,n-1\},\{a,b\},\delta_1,1,\{n-1\})$ from above by
  taking a new initial state $0$ and setting the transition function
  $\delta$ to 
  $\delta(0,b)=\{1\}$, $\delta(0,a)=\{1,x\}$, and
  $\delta(q,c)=\delta_1(q,c)$, for $1\leq q\leq n-1$ and letter $c\in\{a,b\}$.
 
  \begin{sloppypar}
  Let $A'_{n-1}$ and $B'_{n,\beta}$ be the powerset automata of
  $A_{n-1}$ and $B_{n,\beta}$.  Then, by reading words~$bw'$ for
  \mbox{$w'\in\{a,b\}^*$}, all states of $A_{n-1}'$ are reachable in
  $B'_{n,\beta}$.  Together with the initial state $\{0\}$, these
  are~$2^{(n-1)/2 +1}$ states if $n$ is odd, and~\mbox{$3\cdot 2^{n/2
      -1}$}
  states if $n$ is even.  For considering words of the form $w=aw'$,
  for $w'\in\{a,b\}^*$, let $k=\lceil\frac{n-1}2\rceil$.  Then
  $k+1\leq x\leq n$ and we reach the states
  \[\delta'(\{0\},aw)=\delta'(\{1,x\},w) = \delta'(\{1\},w) \cup \delta'(\{x\},w).\]
  These states differ from the ones in $A'_{n-1}$ as long as
  $\delta'(\{x\},w)\neq\emptyset$, and this holds if and only if 
	\mbox{$|w|\leq n-x$}. There are $2^{n-x+1}-1=\beta$ such words, so there are
  $\beta$ additional states.
  \end{sloppypar}
\end{proof}

%\bibliographystyle{eptcs}
%\bibliography{mkshortj,sabiblio,shortmarkus,markus}

\end{document}